\documentclass[11pt]{article}

\usepackage[T1]{fontenc}
\usepackage{lmodern}
\usepackage[margin=1in]{geometry}
\usepackage{authblk}
\usepackage{xcolor}

\usepackage{amsmath,amssymb,amsthm,mathtools}
\usepackage{microtype}
\usepackage{natbib}
\usepackage{comment}
\usepackage{tikz}
\usepackage{graphicx}
\usepackage{pgfplots}
\usepgfplotslibrary{groupplots}
\pgfplotsset{compat=1.18}
\usetikzlibrary{arrows.meta,calc}
\usepackage[hidelinks]{hyperref}

\allowdisplaybreaks

\newtheorem{theorem}{Theorem}
\newtheorem{lemma}{Lemma}

\newtheorem{definition}{Definition}

\newtheorem*{theorem*}{Theorem}

\DeclareMathOperator{\Unif}{Unif}
\newcommand{\OPT}{\operatorname{OPT}}
\newcommand{\OPTpart}{\operatorname{OPT}_{\mathrm{part}}}
\newcommand{\E}{\mathbb E}
\newcommand{\R}{\mathbb R}
\newcommand{\conv}{\operatorname{conv}}
\newcommand{\downconv}{\operatorname{downconv}}
\newcommand{\supp}{\operatorname{supp}}
\newcommand{\diff}{\,\mathrm{d}}

\title{No Extra Signals Needed:\\The Uniform Price of Explainable Information Design}
\author[1]{Francesco Bacchiocchi}
\author[2]{Tommaso Cesari}
\author[3]{Roberto Colomboni}

\affil[1]{DEIB, Politecnico di Milano, Milano, Italy}
\affil[2]{School of Electrical Engineering and Computer Science, University of Ottawa, Ottawa, Canada}
\affil[3]{School of Mathematics, University of Bristol, Bristol, United Kingdom}

\affil[ ]{\footnotesize
\texttt{francesco.bacchiocchi@polimi.it},
\texttt{tcesari@uottawa.ca},
\texttt{roberto.colomboni@bristol.ac.uk}
}
\date{}

\begin{document}
\maketitle

\begin{abstract}
In information design, an informed sender aims to influence a receiver’s decision by committing to a signaling scheme. However, optimal signaling schemes often rely on randomization or assign the same signal to disconnected regions of the state space, making them difficult to interpret or communicate. Motivated by these limitations, we focus on explainable information design in the one-dimensional linear setting, where an explainable policy partitions the state space into at most $K$ consecutive intervals and deterministically sends a distinct signal for each interval. We study the \emph{price of explainability}, defined as the worst-case ratio between the optimal value achieved by an explainable signaling scheme and that achieved by an unrestricted signaling scheme using the same number of signals. Under a uniform prior, \cite{ChenLinTangTuckerFoltz2026} established a tight $2/3$ guarantee when the explainable signaling scheme was allowed to use additional signals. They also showed that the same $2/3$ guarantee holds when both the explainable and unrestricted signaling schemes use at most $K$ signals, provided that utilities are binary-valued and $K \geq 4$, leaving the case of arbitrary bounded utilities open.

We resolve this question completely. Under a uniform prior, the price of explainability is exactly $1/2$ for $K=2$ and exactly $2/3$ for every $K \geq 3$. For both regimes, we also show that the corresponding ratios are tight.
\end{abstract}

\section{Introduction}

Information design, also known as Bayesian persuasion, was introduced by~\citet{KamenicaGentzkow2011} to study how an informed sender can strategically disclose private information in order to influence a receiver’s behavior. In this framework, the sender’s private information is represented by the state of nature. Before the state is realized, the sender commits to an information policy (also called a signaling scheme), \emph{i.e.}, a mapping from states of nature to probability distributions over signals. Once the state is realized and observed by the sender, a signal is generated according to the committed signaling scheme. The receiver then updates their prior beliefs about the underlying state and chooses an optimal action. Information design has become a fundamental tool in economics and algorithmic game theory, with applications to auctions, recommendation systems, voting, contracting, and market design.

However, the structure of an optimal information policy is often opaque. Indeed, even when the state space is a one-dimensional interval and the policy is deterministic, it may still lack explainability, as the same signal may be assigned to disconnected subsets of the state space (see, \emph{e.g.}, Example~1.1 in \citet{ChenLinTangTuckerFoltz2026}). Although such non-monotone policies can be optimal \citep{ArieliBabichenkoSmorodinskyYamashita2023}, they may be undesirable in practice because of transparency concerns and the difficulty of understanding them. This motivates the study of information policies with simpler and more interpretable structures.

\citet{ChenLinTangTuckerFoltz2026} proposed the framework of
\emph{explainable information design} to formalize this requirement. In the one-dimensional setting, an explainable policy chooses cutoffs that partition the state space into at most $K$ consecutive intervals and deterministically sends a distinct signal for each interval, thereby revealing which interval contains the realized state. They define the \emph{price of explainability} as the {worst-case} ratio of the optimal expected utility of an explainable policy to that of an unrestricted policy, both using at most $K$ signals.

For absolutely continuous one-dimensional priors with full support and the same signal budget on both sides, \citet{ChenLinTangTuckerFoltz2026} proved that this ratio is exactly $1/2$. Indeed, the best explainable policy always retains at least half of the unrestricted optimum, and no larger universal guarantee is possible. Under a uniform prior, they obtained the stronger ratio $2/3$ in two settings: for binary-valued utilities when both policies use at most $K$ signals and $K \ge 4$, and for general utilities when the explainable policy may use $K'\ge3K/2$ signals against an unrestricted policy using at most $K$ signals.
They left open whether, under a uniform prior and for general utilities, the $2/3$ guarantee can be achieved when both policies use at most $K$ signals.

We completely resolve this open question for arbitrary bounded utilities. For $K=2$, the price of explainability is $1/2$, whereas for every $K\ge 3$, it is $2/3$.

\subsection{Setting}\label{sec:setting}

In our framework, the state of nature, denoted by
$\theta\in\Theta$, is drawn from the uniform prior distribution on
$\Theta\coloneqq[0,1]$. A signaling scheme over a finite signal set
$\mathcal S$, with $|\mathcal S|\leq K$, is a Markov kernel from
$\Theta$ to $\mathcal S$. That is, it is a map
\[
    \sigma:\Theta\times\mathcal S\to[0,1]
\]
such that, for every $s\in\mathcal S$, the function
$\theta\mapsto\sigma(s\mid\theta)$ is measurable and, for every
$\theta\in\Theta$,
\[
    \sum_{s\in\mathcal S}\sigma(s\mid\theta)=1.
\]
Here, $\sigma(s\mid\theta)$ denotes the probability of sending signal
$s$ after observing state $\theta$. We say that a signaling scheme is
\emph{deterministic} if, for every $\theta\in\Theta$, there exists a
signal $s\in\mathcal S$ such that $\sigma(s\mid\theta)=1$.

The ex ante probability of observing signal $s$ is
\[
    \Pr[s]
    =
    \int_0^1 \sigma(s\mid\theta)\diff\theta.
\]
We discard any signal $s$ such that $\Pr[s]=0$. Hence, without
loss of generality, every signal in $\mathcal S$ has positive ex ante
probability. For every $s\in\mathcal S$, the receiver's posterior mean is
\[
    \mu_s
    \coloneqq
    \E[\theta\mid s]
    =
    \frac{\int_0^1 \theta\,\sigma(s\mid\theta)\diff\theta}
         {\Pr[s]}.
\]
In this paper, we focus on \emph{linear information design}, where the
sender's payoff depends on the posterior belief only through its mean.
Thus, when signal $s$ induces posterior mean $\mu_s$, the sender
receives payoff $u(\mu_s)$, where $u:[0,1]\to[0,1]$.
Thus, the expected value of the signaling scheme is
\[
    \sum_{s\in\mathcal S}\Pr[s]\,u(\mu_s).
\]
Every signaling scheme satisfies the Bayes-plausibility conditions
\[
    \sum_{s\in\mathcal S}\Pr[s]=1
    \qquad\text{and}\qquad
    \sum_{s\in\mathcal S}\Pr[s]\mu_s
=
    \E[\theta]
=
    \frac{1}{2}.
\]
We write $\OPT^u(K)$ for the supremum value over all signaling
schemes with at most $K$ signals. When the utility is clear, we
simply write $\OPT(K)$.

An \emph{interval-partitional} signaling scheme is a deterministic
signaling scheme specified by thresholds
$0=a_0\leq a_1\leq\cdots\leq a_K=1$, which assigns a distinct signal
to the interior of each nonempty interval
$[a_{i-1},a_i]$.\footnote{At the threshold points
$\{a_i\}_{i=0}^K$, the signaling scheme may send an arbitrary signal. Under a
uniform prior, these points have probability zero and therefore do
not affect the information policy.}
Under the uniform prior, each nonempty interval
$[a_{i-1},a_i]$ has mass $a_i-a_{i-1}$ and posterior mean $({a_{i-1}+a_i})/{2}.$
We do not exclude the possibility of zero-length intervals, so optimizing over schemes with at most $K$ nonempty intervals is equivalent to optimizing over $K$ intervals, some of which may be empty. We denote by $\OPTpart^u(K)$ the supremum value over all interval-partitional
signaling schemes with at most $K$ nonempty intervals.
When the utility is clear, we simply write $\OPTpart(K)$.

For $K\geq 2$, we define the \emph{uniform price of explainability}, following \citet{ChenLinTangTuckerFoltz2026}, as follows
\[
    c_K^{\rm unif}
    :=
    \inf_{u:[0,1]\to[0,1]}
    \frac{\OPTpart^u(K)}{\OPT^u(K)},
\]
ignoring utilities with $\OPT^u(K)=0$. Equivalently, our lower bounds are payoff inequalities that remain meaningful even when the denominator is zero.

\subsection{Original contributions}\label{sec:contributions}

\paragraph{Overview of the results.} Our main contribution is an exact characterization of the uniform price of explainability. Precisely: 
\begin{itemize}
    \item For every $K\ge3$ and every bounded utility
    $u:[0,1]\to[0,1]$, we prove
    \[
        \OPTpart(K)\ge\frac23\OPT(K).
    \]
    For $K=3$, this result is new even for binary-valued utilities, improving on the guarantee established by \citet{ChenLinTangTuckerFoltz2026}, while for $K\ge4$, it extends the guarantee of
    \citet{ChenLinTangTuckerFoltz2026} from binary-valued utilities to
    arbitrary bounded utilities.

    \item We show that $K=2$ is the unique exception by constructing
    binary-valued uniform-prior instances whose ratio approaches
    $1/2$.
    Combined with the general $1/2$ lower bound of
    \citet{ChenLinTangTuckerFoltz2026}, this yields
    $c^{\rm unif}_2=1/2$.
\end{itemize}

The matching upper bound $c_K^{\rm unif}\le2/3$ for $K\ge3$ already
follows from the example of
\citet{ChenLinTangTuckerFoltz2026} in which the sender's utility is supported at
posterior means $1/3$ and $2/3$. We summarize our results in the following theorem.

\begin{theorem*}[{Informal version of Theorem~\ref{thm:main}}]
The uniform price of explainability satisfies
\[
    c_K^{\rm unif}
    =
    \begin{cases}
        \dfrac12, & K=2,\\[2mm]
        \dfrac23, & K\ge3.
    \end{cases}
\]
\end{theorem*}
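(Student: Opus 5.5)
The plan has two halves: a matching construction for $K=2$ and a new lower bound for $K\ge 3$. The remaining bounds are already available. The bound $c^{\rm unif}_2\ge 1/2$ is the general guarantee of \citet{ChenLinTangTuckerFoltz2026}, and $c^{\rm unif}_K\le 2/3$ for $K\ge3$ comes from their example supported at posterior means $1/3$ and $2/3$.

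\textbf{Reduction to posterior means.} First I reduce $\OPT(K)$ to a problem over distributions of posterior means. A pair of weights and means $(p_i,x_i)_{i\le K}$ is implementable exactly when the discrete law $\sum_i p_i\delta_{x_i}$ is a mean-preserving contraction of $\Unif[0,1]$. For $F(t)=\int_0^t\Pr[\theta\le s]\diff s$, this is the condition that $\sum_i p_i(t-x_i)_+\le F(t)$ for all $t$, with equality at $t=1$. One consequence I will use repeatedly is that the $p$ mass lying below any point is at least as spread as the corresponding lowest-quantile block. In particular $p_i/2\le x_i\le 1-p_i/2$, so the interval $I_i=[x_i-p_i/2,\,x_i+p_i/2]$ lies in $[0,1]$ and has exactly weight $p_i$ and mean $x_i$. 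It therefore earns $p_iu(x_i)$ as an interval signal.

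\textbf{Lower bound for $K\ge3$.} I order the optimal means $x_1<\dots<x_K$ and study the family $\{I_i\}$. The total length of the $I_i$ is $1$, so if they were disjoint they would tile $[0,1]$ and explainability would cost nothing. The plan is a $3$-way averaging argument, and there are two steps.
\begin{enumerate}
\item For each residue $r\in\{0,1,2\}$, I build an interval partition $P_r$ with at most $K$ cells. In $P_r$, every index $i\not\equiv r\pmod 3$ is realised by a cell with mean $x_i$ and mass at least $p_i$, or at least contributing $p_iu(x_i)$. The dropped indices $i\equiv r$ serve as ``buffer slots'': their cells absorb the gaps and overlaps created by the neighbouring kept intervals. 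Since $u\ge0$, a buffer cell never hurts.
\item Averaging over $r$, each index is kept in two of the three partitions. Hence $\max_r \mathrm{val}(P_r)\ge \frac13\sum_r \mathrm{val}(P_r)\ge \frac23\sum_i p_iu(x_i)$.
\end{enumerate}
The signal count works out because each dropped slot is recycled as one gap cell. When $K$ is not a multiple of $3$, the ends of the ordering need care, and this is where $K\ge 3$ (rather than $K'\ge 3K/2$) enters.

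\textbf{Main obstacle.} The hard step is showing that two consecutive kept intervals, separated by at most one dropped index, can be placed disjointly. Ordering alone does not give this. One needs that the overlap of $I_{i-1}$ and $I_{i+1}$, net of the region "owned'' by $x_i$, is controlled by the contraction inequality at $t$ between the means. If the raw intervals $I_j$ still collide, I would first try sliding kept intervals while preserving their means. That is, I would replace $I_j$ by a non-centred interval of the same mass covering the mean-preserving part of the quantile region. I would use $F$ to show that a dropped index always frees enough room. If this fails in general, the fallback is to replace the uniform $1/3$ averaging by a weighted choice among the three residues (or an LP dual), tuned to the overlap pattern.

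\textbf{Upper bound for $K=2$.} For a binary utility $u=\mathbf 1_A$, an explainable scheme with cut $a$ produces means $a/2$ and $(1+a)/2$, which are tied together rigidly. Unrestricted schemes can instead realise two means $\alpha<\beta$ with weights $p,1-p$ whenever $p\alpha+(1-p)\beta=1/2$ and the pair is a contraction, e.g.\ via non-interval (two-sided) sets for the lower signal. I will choose $A=\{\alpha,\beta\}$ with such a feasible unrestricted pair, so that $\OPT(2)=1$, and with both $2\alpha$ and $2(1-\beta)$ close to $1/2$ while $\beta\ne \alpha+1/2$. Then every cut hits at most one point of $A$ and earns at most about $1/2$. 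Finding parameters in which the contraction constraint is still met is a short computation that I expect to need a limiting family (e.g.\ $\alpha\to1/4$ from above with $p$ adjusted), giving ratios tending to $1/2$.
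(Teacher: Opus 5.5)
\textbf{The $K=2$ half is fine; the $K\ge3$ lower bound has a genuine gap.} Your $K=2$ half and the two cited bounds are fine. The family you describe is realized exactly by the paper's instance $\alpha=1/4+\varepsilon$, $\beta=3/4-\varepsilon$, $p=1/2$ (signal $[\varepsilon,1/2+\varepsilon]$ versus its complement), so no limiting adjustment of $p$ is needed.

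The problem is the $K\ge3$ lower bound, which is the real content of the theorem. There, Step~1 of your residue scheme is false. Under the uniform prior an interval's posterior mean is its midpoint, so the only interval of mass $p_j$ and mean $x_j$ is $I_j$ itself. Your ``sliding'' repair is therefore unavailable, and keeping index $i$ at full mass forces a cell centered at $x_i$ of length at least $p_i$.

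Now take the paper's $K=2$ instance and hand a tiny interval $[5/8-\eta,5/8+\eta]$ of the second signal's region to a third signal. This gives $m=K=3$ with $x_1=1/4+\varepsilon$, $p_1=1/2$, $x_2=5/8$, $p_2=2\eta$, and $x_3\to3/4-\varepsilon$, $p_3\to1/2$ as $\eta\to0$. Two disjoint intervals centered at $x_1<x_3$ have total length at most $2(x_3-x_1)\to1-4\varepsilon<p_1+p_3$. Two disjoint intervals centered at $x_2<x_3$ have total length at most $2(x_3-x_2)\to1/4-2\varepsilon<p_3$. So the partitions for both residue classes that keep $x_3$ do not exist, and no reweighting of residues (your fallback) can protect the payoff of, say, $u=\mathbf 1\{t=x_3\}$.

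The cell count also fails, independently of overlaps: one dropped slot cannot pay for gaps at both ends and between adjacent kept indices. Take masses $(1/5,1/5,3/5)$ at means $(1/5,9/20,37/60)$, which is a mean-preserving contraction of $\Unif[0,1]$. The full intervals around $1/5$ and $9/20$ are disjoint. Yet no partition into three cells realizes both: since both centers lie below $1/2$, neither cell can end at $1$. This forces $[0,2/5]$ followed by $[2/5,1/2]$, and the second cell is too short.

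\textbf{The missing idea.} You need to give up both full masses and single deterministic partitions per residue. The paper only shows $\frac23p\in\downconv W_K(S)$, i.e.\ that $\frac23p$ is dominated by a \emph{convex combination} of $K$-interval partition vectors. It selects one partition only at the very end (Lemma~\ref{lem:payoff}).

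It constructs a random centered packing with $\E[\Lambda_i]\ge\frac23p_i$ that omits at least one target in every realization. Lemma~\ref{lem:sparse-hull} then turns any such packing into a mixture of $K$-interval partitions: it works with extreme points of the packing polytope, where gaps become cells. This resolves the counting problem.

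Overlaps are handled structurally rather than locally. Implementability is relaxed to the prefix inequalities $M_j\ge A_j^2/2$. Every extreme point of the resulting convex set has either a zero mass or a tight proper prefix. A tight prefix splits the instance at $A_j$ into adjacent blocks that are again outer feasible after rescaling (Lemma~\ref{lem:prefix-split}), so no interval ever has to cross a block boundary.

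The recursion bottoms out in one- and two-point blocks with explicit randomized certificates. For two points, the certificate uses the touching pair of lengths $(2v,2u)$ with probability $1/3$ plus endpoint-anchored singletons, preserving exactly $\frac23$ of each mass in expectation. Blocks are glued by the disjoint-exception coupling, so that some target is always omitted. Without a replacement for this decomposition, your argument does not establish $\OPTpart^u(K)\ge\frac23\OPT^u(K)$.
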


\paragraph{Techniques and challenges.}\label{sec:techniques}

\cite{ChenLinTangTuckerFoltz2026} showed that, under a uniform prior, an explainable policy can preserve at least two thirds of the value of an unrestricted $K$-signal policy for arbitrary utilities, provided that the explainable policy may use $K' \geq 3K/2$ intervals. Their proof builds on a result by~\cite{ArieliBabichenkoSmorodinskyYamashita2023}, which allows an optimal finite-signal policy to be decomposed into consecutive intervals, each associated with at most two posterior means. \cite{ChenLinTangTuckerFoltz2026} convert each interval in this decomposition separately into at most three consecutive subintervals. Taken together, these subintervals form the final explainable signaling scheme. This interval-by-interval conversion explains why their construction may require the larger budget $K' \geq 3K/2$.

With only $K$ intervals, these conversions cannot be performed independently. A choice made in one part of the policy may reduce the number of intervals available for the remaining parts. Theorem~\ref{thm:k2} already illustrates this difficulty. Even when the unrestricted policy induces only two posterior means, there may be no partition into two intervals that preserves both of them. 

\cite{ChenLinTangTuckerFoltz2026} also proved that the uniform price of explainability is $2/3$ for binary-valued utilities when $K\geq 4$. Extending this guarantee to arbitrary utilities, however, presents an additional challenge. Under a binary utility, each posterior mean is simply classified as useful or not useful, depending on whether it gives the sender positive utility. It is therefore enough to preserve sufficient probability on the set of useful posterior means. For arbitrary utilities, different posterior means can contribute different amounts to the payoff, so preserving enough total probability does not necessarily preserve enough value. We therefore need a stronger construction that simultaneously preserves a fraction of the probability associated with each posterior mean.

We start from an arbitrary finite-signal information policy and merge signals that induce the same posterior mean. The policy can then be represented by an ordered target set $S=\{x_1,\ldots,x_m\}$ and a probability vector $(p_1,\ldots,p_m)$, where $p_i$ is the probability assigned to posterior mean $x_i$. Notice that $m\leq K$, since the number of distinct posterior
means cannot exceed the number of signals used by the policy. Our goal is to show that there exists a distribution over explainable policies under which the expected probability assigned to each posterior mean \(x_i\) is at least \(2p_i/3\). 

We do not construct such a distribution directly. Instead, we introduce a random centered packing, namely, a random, possibly incomplete collection of disjoint intervals centered at the posterior means that we want the final policy to induce. Under the uniform prior, the length of an interval centered at $x_i$ is the probability assigned to that posterior mean by an interval-partitional policy.
We therefore aim to guarantee, for every $i$, that the interval centered at $x_i$ has expected length at least $2p_i/3$.

For a fixed ordered target set $S$, we consider the probability vectors induced by signaling schemes whose posterior means belong to $S$. These
vectors satisfy the Bayes-plausibility conditions~\citep{KamenicaGentzkow2011}
and additional constraints imposed by the uniform prior. The latter take the form of prefix inequalities, since each of them concerns an initial segment, or prefix, of the ordered target set. We denote by $F_m(S)$ the set of all probability vectors satisfying these conditions.

The set $F_m(S)$ is compact and convex. Our goal is to construct, for every $p\in F_m(S)$, a random centered packing indexed by the elements of $S$ whose expected interval length at each target $x_i$ is at least $2p_i/3$ and omits at least one element of $S$ in every realization. Rather than constructing such a packing directly, we reduce the problem to smaller target sets. Since every vector in $F_m(S)$ is a convex combination of its extreme points and the desired packing guarantees are preserved under mixtures, it is enough to consider an extreme point of $F_m(S)$. At such a point, one of two things happens. Either some probability $p_i$ is zero, in which case we remove the corresponding posterior mean, solve the smaller instance, and assign zero length to the removed coordinate, or at least one proper prefix inequality is tight. In the latter case, the target set and its probability vector split into two adjacent smaller instances. After an affine rescaling, we recursively construct a random centered packing for each smaller instance, using instances in which the target set has cardinality one or two as base cases. We then couple and concatenate the two resulting packings to obtain a centered packing on the original target set $S$ that omits at least one element of $S$ in every realization. Finally, mixing the constructions associated with the extreme points gives the desired packing for every $p\in F_m(S)$.

Hence, if the target set $S$ has cardinality $m$, every realization of the random packing uses intervals centered at no more than $m-1$ elements of $S$, and therefore at no more than $K-1$, since $m\leq K$. This support bound allows us to apply a convex-hull argument showing that the expected length vector of the random packing is coordinatewise dominated by a convex combination of length vectors induced by explainable signaling schemes using at most $K$ intervals. Since the utility is nonnegative, at least one deterministic partition in this combination achieves at least two thirds of the payoff of the original unrestricted policy, proving the desired result.

\subsection{Related work}\label{sec:related-work}

Bayesian persuasion, introduced by \citet{KamenicaGentzkow2011}, studies how an informed sender can strategically disclose information to influence the decision of a Bayesian receiver. Over the past decade, this framework has received considerable attention from the computer science community and has been applied to a wide range of settings, including ad auctions \citep{EmekFeldmanGamzuPaesLemeTennenholtz2014,BadanidiyuruBhawalkarXu2018,Bacchiocchi2022Public}, price discrimination and fair selection \citep{BanerjeeMunagalaShenWang2024, BanerjeeMunagalaShenWang2025}, principal-agent and contract-design settings \citep{BabichenkoTalgamCohenXuZabarnyi2024,CastiglioniChen2025}, online recommendation and misinformation \citep{ZuIyerXu2021,FengTangXu2022,HossainMladenovicChenGidel2024}, and voting \citep{AlonsoCamara2016a,AlonsoCamara2016b}.

\paragraph{Linear information design.} 

Our work belongs to the rich literature on one-dimensional linear information design. In this setting, uncertainty is represented by a single numerical state, and the sender’s payoff depends on the receiver’s posterior belief only through its mean. \citet{GentzkowKamenica2016} and \citet{KleinerMoldovanuStrack2021} characterize which distributions of posterior means can be induced by an information policy while remaining consistent with the prior.
\citet{DworczakMartini2019}, \citet{ArieliBabichenkoSmorodinskyYamashita2023}, and \citet{KolotilinLiZapechelnyuk2025} investigate the structure of optimal policies and study conditions under which such policies can be chosen to have a simple, partitional, or monotone structure. These results are particularly relevant to our setting because an optimal policy need not be easy to interpret. The characterizations of \citet{ArieliBabichenkoSmorodinskyYamashita2023} and \citet{KleinerMoldovanuStrack2021}, for example, imply that an optimal policy can be organized into intervals within which it either fully reveals the state or induces at most two posterior means. Finally, \citet{FengHoTang2024} study settings with receivers whose behavior follows a quantal-response model.

\paragraph{Relation to \cite{ChenLinTangTuckerFoltz2026}.} Explainable information design was introduced by
\citet{ChenLinTangTuckerFoltz2026} to study information policies with a
simple and interpretable structure.
In the one-dimensional model, their explainable class consists of
deterministic policies that reveal which interval of a partition contains the realized state of nature.
They prove a tight price of explainability of $1/2$ for
absolutely continuous one-dimensional priors with full support, for every
$K\ge2$.
Under the uniform prior, they obtain a tight $2/3$-approximation when the
interval scheme may use $K'\ge3K/2$ signals, and leave open whether the
signal increase is necessary.
They also prove a $2/3$ ratio for binary-valued utilities
when $K\ge4$.
Our paper resolves this open question completely. For arbitrary bounded utilities, the price of explainability is exactly $1/2$ for $K=2$ and exactly $2/3$ for every $K \geq 3$.

\paragraph{Persuasion under communication and signal constraints.}
Our work is also related to Bayesian persuasion under communication and signal constraints. Indeed, while explainability restricts the structure of a
signaling scheme, a bound on the number of signals limits the amount of
information that the sender can communicate. \citet{GradwohlHahnHoeferSmorodinsky2022} analyze algorithms for persuasion with limited communication. \citet{LyuSuenZhang2024} consider coarse information design with a continuous state space and a finite signal space. Finally, \citet{ZhaoLiuShen2025} study how signaling schemes with a limited number of signals approximate the optimal unrestricted scheme.

\section{The two-signal case}\label{sec:k2}

The two-signal case is the only one in which the price of explainability under the uniform prior is $1/2$, rather than $2/3$.
A direct proof of the lower bound relies on a simple geometric observation.
Consider an arbitrary unrestricted signaling scheme and fix one of its signals. The probability with which this signal is sent cannot exceed the length of the longest interval $I \subseteq [0,1]$ whose midpoint is the posterior mean induced by that signal. This interval necessarily touches one of the endpoints of $[0,1]$, and therefore $I$ and its complement form an interval partition.
We can thus construct an interval-partitional signaling scheme using $I$ and its complement. Since $I$ induces the same posterior mean as the original signal and has length at least equal to the probability of that signal, it preserves at least its entire contribution to the sender’s expected utility.
Repeating the argument for the other signal yields a second interval-partitional signaling scheme. Choosing the better of the two signaling schemes preserves at least half the total payoff. 

To formalize the argument above, we first present the following lemma.

\begin{lemma}\label{lem:max-mass}
Let $q:[0,1]\to[0,1]$ be measurable, and set
\[
    \alpha:=\int_0^1q(t)\diff t.
\]
If $\alpha>0$ and
\[
    m:=\frac{1}{\alpha}\int_0^1tq(t)\diff t,
\]
then
\[
    \alpha\le2m\quad\text{if }m\le\frac12,
    \qquad
    \alpha\le2(1-m)\quad\text{if }m\ge\frac12.
\]
\end{lemma}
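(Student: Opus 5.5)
By the symmetry $t\mapsto 1-t$, it suffices to prove $\alpha\le 2m$ for every $q$ with $m\le\frac12$. Indeed, applying that case to $\tilde q(t):=q(1-t)$ gives the other bound. The function $\tilde q$ has the same mass $\alpha$ and barycenter $\tilde m=1-m$, so if $m\ge\frac12$ then $\tilde m\le\frac12$, and we obtain $\alpha\le 2\tilde m=2(1-m)$. In fact, I expect to show the stronger inequality $\alpha\le 2m$ for every $m$. The restriction $m\le\frac12$ only indicates which of the two bounds is the sharper one.

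The key step is a bathtub (rearrangement) argument. Among all measurable $q:[0,1]\to[0,1]$ with $\int_0^1 q=\alpha$, the first moment $\int_0^1 tq(t)\diff t$ is minimized by $q^*=\mathbf 1_{[0,\alpha]}$, whose first moment is $\alpha^2/2$. To prove this, I will show that
\[
\int_0^1 t\,q(t)\diff t-\int_0^\alpha t\diff t=\int_0^1 (t-\alpha)\bigl(q(t)-\mathbf 1_{[0,\alpha]}(t)\bigr)\diff t\ \ge 0.
\]
The equality holds because $\int_0^1 \alpha\,(q-\mathbf 1_{[0,\alpha]})=0$, since both functions have mass $\alpha$. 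The integrand is pointwise nonnegative. For $t<\alpha$ we have $t-\alpha<0$ and $q(t)-1\le 0$. For $t>\alpha$ we have $t-\alpha>0$ and $q(t)\ge 0$.

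Given this, $\alpha m=\int_0^1 tq(t)\diff t\ge\alpha^2/2$. Dividing by $\alpha>0$ yields $\alpha\le 2m$, as required.

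The only real obstacle is spotting the trick of subtracting the mean-zero term $\alpha(q-\mathbf 1_{[0,\alpha]})$, which makes the integrand pointwise signed. Once that is in place, the remaining steps are routine.
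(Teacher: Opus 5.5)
Your proof is correct and is essentially the paper's argument: you compare $q$ with $\mathbf 1_{[0,\alpha]}$ by the same bathtub inequality and obtain the second bound by reflecting with $t\mapsto 1-t$. Your single pointwise-nonnegative integrand $(t-\alpha)\bigl(q(t)-\mathbf 1_{[0,\alpha]}(t)\bigr)$ only repackages the paper's split into $\int_\alpha^1 tq(t)$ and $\int_0^\alpha t(1-q(t))$. Your remark that $\alpha\le 2m$ holds without assuming $m\le\frac12$ is also right, and the paper's argument shows the same.
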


\begin{proof}
Among all functions $q:[0,1]\to[0,1]$ with integral $\alpha$, the first
moment is minimized by $\mathbf 1_{[0,\alpha]}$.  Indeed,
\begin{align*}
    \int_0^1tq(t)\diff t-\int_0^\alpha t\diff t
    =
    \int_\alpha^1tq(t)\diff t
    -\int_0^\alpha t(1-q(t))\diff t
    \ge
    \alpha\int_\alpha^1q(t)\diff t
    -\alpha\int_0^\alpha(1-q(t))\diff t
    =0,
\end{align*}
because the two integrals in the last line are equal.  Hence
$
    \alpha m\ge{\alpha^2}/{2},
$
which gives $\alpha\le2m$.  Applying the same argument to
$\widetilde q(t):=q(1-t)$ gives $\alpha\le2(1-m)$.
\end{proof}

Using the lemma above, we can prove the following theorem holds.

\begin{theorem}[{Price of explainability for $K=2$}]\label{thm:k2}
For every bounded utility $u:[0,1]\to[0,1]$,
\[
    \OPTpart^u(2)\ge\frac12\OPT^u(2).
\]
Moreover, for every $\varepsilon\in(0,1/12)$, there is a binary-valued
utility $u_\varepsilon:[0,1]\to \{0,1\}$ such that
\[
    \OPT^{u_\varepsilon}(2)=1,
    \qquad
    \OPTpart^{u_\varepsilon}(2)=\frac12+2\varepsilon.
\]
Consequently, $c_2^{\rm unif}=1/2$.
\end{theorem}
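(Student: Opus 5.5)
We prove the lower bound by applying Lemma~\ref{lem:max-mass} to each signal separately, and the tightness by an explicit symmetric two-point instance.

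\textbf{Lower bound.} Fix an arbitrary signaling scheme with signals $s_1,s_2$, probabilities $\alpha_j=\Pr[s_j]$ and posterior means $m_j=\mu_{s_j}$. Its value is $\alpha_1u(m_1)+\alpha_2u(m_2)$. For each $j$, apply Lemma~\ref{lem:max-mass} with $q(\theta)=\sigma(s_j\mid\theta)$.
\begin{itemize}
\item If $m_j\le 1/2$, then $\alpha_j\le 2m_j$. Consider the interval partition $[0,2m_j],[2m_j,1]$. Its first cell has posterior mean $m_j$ and mass $2m_j\ge\alpha_j$.
\item If $m_j\ge 1/2$, use $[0,2m_j-1],[2m_j-1,1]$ instead. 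Its second cell has mean $m_j$ and mass $2(1-m_j)\ge\alpha_j$.
\end{itemize}
Since $u\ge 0$, the resulting interval-partitional scheme $P_j$ has value at least $\alpha_ju(m_j)$. Hence
\[
\OPTpart(2)\ge\max_j\alpha_ju(m_j)\ge\tfrac12\bigl(\alpha_1u(m_1)+\alpha_2u(m_2)\bigr).
\]
Taking the supremum over schemes gives $\OPTpart(2)\ge\frac12\OPT(2)$. Schemes with a single signal are covered trivially by the one-interval partition.

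\textbf{Tight instance.} Let $x_1=1/4+\varepsilon$, $x_2=3/4-\varepsilon$, and $u_\varepsilon=\mathbf 1_{\{x_1,x_2\}}$. Trivially $\OPT\le1$ because $u\le1$.

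To reach $1$, we build a scheme whose two signals have probability $1/2$ each and means exactly $x_1,x_2$:
\begin{itemize}
\item on $[0,1/2]$, send $s_1$ with probability $1-\delta$ and $s_2$ with probability $\delta$;
\item on $[1/2,1]$, do the reverse.
\end{itemize}
Then $\Pr[s_1]=\Pr[s_2]=1/2$. A direct computation gives $\mu_{s_1}=(1-\delta)\cdot\frac14+\delta\cdot\frac34=\frac14+\frac\delta2$, and symmetrically $\mu_{s_2}=\frac34-\frac\delta2$. Choosing $\delta=2\varepsilon\in[0,1]$ gives means $x_1,x_2$, so $\OPT^{u_\varepsilon}(2)=1$.

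\textbf{Value of partitions.} A partition $[0,a],[a,1]$ has means $a/2$ and $(1+a)/2$ with masses $a$ and $1-a$. A cell contributes only if its mean lies in $\{x_1,x_2\}$.
\begin{itemize}
\item The first cell can contribute only if $a=2x_1=1/2+2\varepsilon$, since $a=2x_2>1$ is infeasible.
\item The second cell can contribute only if $a=2x_2-1=1/2-2\varepsilon$, since $a=2x_1-1<0$ is infeasible.
\item These two values of $a$ differ, so at most one cell is ever useful. Equivalently, both cells would be useful only if $x_2-x_1=1/2$, which fails for $\varepsilon>0$.
\end{itemize}
Either way, the useful cell has mass $1/2+2\varepsilon$. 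All other $a$, including the degenerate partitions $a\in\{0,1\}$ with mean $1/2\notin\{x_1,x_2\}$, give value $0$. The condition $\varepsilon<1/12$ ensures all these points are distinct and inside $(0,1)$. Hence $\OPTpart^{u_\varepsilon}(2)=1/2+2\varepsilon$, and letting $\varepsilon\to0$ together with the lower bound gives $c_2^{\rm unif}=1/2$.

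The main point needing care is the feasibility of the unrestricted scheme attaining $1$ with means $x_1,x_2$. This is handled by the explicit swap construction above rather than an abstract convex-order argument; the remainder is direct checking.
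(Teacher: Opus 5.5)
Your proof is correct and follows the paper's argument essentially step for step: Lemma~\ref{lem:max-mass} applied to each signal, the two-cell partition cut at $2m_j$ or $2m_j-1$, the ``max is at least the average'' step, and the same threshold analysis for $u_\varepsilon=\mathbf 1_{\{x_1,x_2\}}$. The differences are minor: you case-split on $m_j\le 1/2$ versus $m_j\ge 1/2$ rather than first using Bayes plausibility to get $\mu_1<1/2<\mu_2$, and you certify $\OPT^{u_\varepsilon}(2)=1$ with a randomized swap scheme. The paper instead uses the deterministic but non-interval scheme that sends one signal on $[\varepsilon,1/2+\varepsilon]$ and the other on its complement, which also shows that the gap does not come from randomization.
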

\begin{proof}
Consider an arbitrary signaling scheme using at most two signals with positive probability. If the signaling scheme uses only one signal $s$, then observing $s$ reveals no information about the realized state, so the posterior distribution coincides with the prior. Its posterior mean is
therefore $\mathbb{E}[\theta \mid s]={1}/{2}.$
Hence, the interval-partitional scheme consisting of the single interval $[0,1]$ induces the same posterior mean and attains the same payoff.

Suppose now that the signaling scheme uses two signals $s_1$ and $s_2$, both with positive probability. For each $i\in\{1,2\}$, let
$
p_i:=\Pr[s_i] $ and $
\mu_i:=\mathbb{E}[\theta\mid s_i].
$
Without loss of generality, we assume that $\mu_1<\mu_2$. Since the prior is uniform on $[0,1]$, for each $i\in\{1,2\}$ we have
\[
p_i=\int_0^1 \sigma(s_i\mid\theta)\,d\theta
\quad 
\textnormal{and}
\quad 
p_i\mu_i
=
\int_0^1 \theta\,\sigma(s_i\mid\theta)\,d\theta.
\]
Moreover,
\[
p_1\mu_1+p_2\mu_2
=
\int_0^1
\theta\bigl(\sigma(s_1\mid\theta)+\sigma(s_2\mid\theta)\bigr)
\,d\theta
=
\int_0^1\theta\,d\theta
=
\frac{1}{2}.
\]
Since $p_1,p_2>0$ and $\mu_1<\mu_2$, it follows that
$
\mu_1<{1}/{2}<\mu_2.
$

Let $u_i:=u(\mu_i)$ for each $i\in\{1,2\}$. Applying Lemma~\ref{lem:max-mass} to the
function $\theta\mapsto\sigma(s_1\mid\theta)$ and using
$\mu_1<1/2$ gives $p_1\leq 2\mu_1.$
Consider the interval-partitional signaling scheme induced by the
partition $[0,2\mu_1]$ and $[2\mu_1,1]$. The interval $[0,2\mu_1]$ has length $2\mu_1$ and posterior mean
$\mu_1$. It therefore contributes $2\mu_1u_1\geq p_1u_1$
to the sender's expected utility. Since $u$ is nonnegative, the other
interval contributes a nonnegative amount. Hence,
\begin{equation}
\mathrm{OPT}_{\mathrm{part}}^u(2)\geq p_1u_1.
\label{eq:two-signal-left}
\end{equation}
By a similar argument, applying Lemma~\ref{lem:max-mass} to the function
$\theta\mapsto\sigma(s_2\mid\theta)$ and using $\mu_2>1/2$ gives
$
p_2\leq 2(1-\mu_2).
$
Consider the interval-partitional signaling scheme induced by the
partition $[0,2\mu_2-1]$ and $[2\mu_2-1,1].$ The interval $[2\mu_2-1,1]$ has length $2(1-\mu_2)$ and posterior mean
$\mu_2$. Therefore,
\begin{equation}
\mathrm{OPT}_{\mathrm{part}}^u(2)\geq p_2u_2.
\label{eq:two-signal-right}
\end{equation}
Combining Equations~\eqref{eq:two-signal-left}
and~\eqref{eq:two-signal-right}, we obtain
\[
\mathrm{OPT}_{\mathrm{part}}^u(2)
\geq
\max\{p_1u_1,p_2u_2\}
\geq
\frac{1}{2}\bigl(p_1u_1+p_2u_2\bigr).
\]
Since the initial signaling scheme was chosen arbitrarily, taking the
supremum over all signaling schemes with at most two signals yields
\[
\mathrm{OPT}_{\mathrm{part}}^u(2)
\geq
\frac{1}{2}\,\mathrm{OPT}^u(2).
\]
To conclude the proof, fix $\varepsilon\in(0,1/12)$, set
\[
x_\varepsilon:=\frac14+\varepsilon,
\qquad
y_\varepsilon:=\frac34-\varepsilon,
\]
and define $u_\varepsilon(t):=\mathbf{1}\{t=x_\varepsilon\}+\mathbf{1}\{t=y_\varepsilon\}.$ Consider the signaling scheme that sends one signal when
$\theta\in[\varepsilon,1/2+\varepsilon]$ and another signal otherwise.
Each signal is observed by the receiver with probability $1/2$, and the
corresponding posterior means are $x_\varepsilon$ and $y_\varepsilon$.
Thus, $\OPT^{u_\varepsilon}(2)=1.$

An interval-partitional signaling scheme with two signals is determined
by a threshold $a\in[0,1]$, and its posterior means are $a/2$ and
$(1+a)/2$. The left interval induces a posterior mean providing positive
utility to the sender only when $a=2x_\varepsilon= 1/2 +2\varepsilon,$
while the right interval induces a posterior mean providing positive
utility to the sender only when $a=2y_\varepsilon-1= 1/2  -2\varepsilon.$

The two alternative matches are infeasible: inducing
$y_\varepsilon$ on the left interval would require
$a=2y_\varepsilon>1$, whereas inducing $x_\varepsilon$ on the
right interval would require $a=2x_\varepsilon-1<0$.

Thus, there is no value of $a$ for which both posterior means $x_\varepsilon$ and $y_\varepsilon$ are induced by the same interval-partitional signaling scheme. Moreover, the largest possible interval inducing a posterior mean that provides positive utility to the sender has length $1/2+2\varepsilon$. See Figure~\ref{fig:two-signal-upper-bound} for a visualization of the two optimal signaling schemes and their induced posterior means.
Therefore,
\[
\OPTpart^{u_\varepsilon}(2)=\frac12+2\varepsilon.
\]
Letting $\varepsilon\downarrow0$ proves that $c_2^{\mathrm{unif}}=1/2.$
\end{proof}

\begin{figure}[h]
    \centering
    \def\eps{0.08}
\def\W{8}

\begin{tikzpicture}[
    scale=0.78,
    transform shape,
    >=Latex,
    axis/.style={black, line width=0.8pt, -{Latex[length=2mm]}},
    boundary/.style={black, line width=0.8pt},
    guide/.style={black, dashed, line width=0.6pt},
    blueblock/.style={fill=blue!45},
    grayblock/.style={fill=gray!25},
    bluearrow/.style={blue, line width=1pt, -{Latex[length=2mm]}},
    grayarrow/.style={
        gray!75!black,
        line width=0.9pt,
        -{Latex[length=2mm]}
    },
    point/.style={circle, fill=black, inner sep=2.2pt}
]


\begin{scope}[xshift=0cm]

\draw[axis] (-0.4,1.4) -- (\W+0.75,1.4)
node[right] {$\theta$};

\draw[boundary,grayblock]
(0,1.4) rectangle ({\W*\eps},2.05);

\draw[boundary,blueblock]
({\W*\eps},1.4)
rectangle ({\W*(0.5+\eps)},2.05);

\draw[boundary,grayblock]
({\W*(0.5+\eps)},1.4)
rectangle (\W,2.05);

\node[above] at (0,2.05) {$0$};
\node[above] at ({\W*\eps},2.05) {$\varepsilon$};
\node[above] at ({\W*(0.5+\eps)},2.05)
{$\frac12+\varepsilon$};
\node[above] at (\W,2.05) {$1$};

\draw[axis] (-0.4,-0.9) -- (\W+0.75,-0.9)
node[right] {$\mu$};

\draw[guide] (0,1.35) -- (0,-0.9);
\draw[guide] (\W,1.35) -- (\W,-0.9);

\coordinate (xepsA) at ({\W*(0.25+\eps)},-0.9);
\coordinate (yepsA) at ({\W*(0.75-\eps)},-0.9);

\node[point] at (xepsA) {};
\node[point] at (yepsA) {};

\node[below=3pt] at (xepsA)
{$x_\varepsilon=\frac14+\varepsilon$};

\node[below=3pt] at (yepsA)
{$y_\varepsilon=\frac34-\varepsilon$};

\node[below=3pt] at (0,-0.9) {$0$};
\node[below=3pt] at (\W,-0.9) {$1$};

\draw[bluearrow]
({\W*(0.25+\eps)},1.4)
-- (xepsA);

\draw[grayarrow]
({\W*\eps/2},1.4)
-- (yepsA);

\draw[grayarrow]
({\W*(0.75+\eps/2)},1.4)
-- (yepsA);

\end{scope}


\begin{scope}[xshift=11cm]

\draw[axis] (-0.4,1.4) -- (\W+0.75,1.4)
node[right] {$\theta$};

\draw[boundary,blueblock]
(0,1.4)
rectangle ({\W*(0.5+2*\eps)},2.05);

\draw[boundary,grayblock]
({\W*(0.5+2*\eps)},1.4)
rectangle (\W,2.05);

\node[above] at (0,2.05) {$0$};
\node[above] at ({\W*(0.5+2*\eps)},2.05)
{$\frac12+2\varepsilon$};
\node[above] at (\W,2.05) {$1$};

\draw[axis] (-0.4,-0.9) -- (\W+0.75,-0.9)
node[right] {$\mu$};

\draw[guide] (0,1.35) -- (0,-0.9);
\draw[guide] (\W,1.35) -- (\W,-0.9);

\coordinate (xepsB) at ({\W*(0.25+\eps)},-0.9);
\coordinate (graymeanB) at ({\W*(0.75+\eps)},-0.9);

\node[point] at (xepsB) {};
\node[point] at (graymeanB) {};

\node[below=3pt] at (xepsB)
{$x_\varepsilon=\frac14+\varepsilon$};


\node[below=3pt] at (graymeanB)
{$\frac34+\varepsilon$};

\node[below=3pt] at (0,-0.9) {$0$};
\node[below=3pt] at (\W,-0.9) {$1$};

\draw[bluearrow]
({\W*(0.25+\eps)},1.4)
-- (xepsB);

\draw[grayarrow]
({\W*(0.75+\eps)},1.4)
-- (graymeanB);

\end{scope}

\end{tikzpicture}
    \caption{On the left, an optimal unrestricted scheme assigns the interval $[\varepsilon,1/2+\varepsilon]$ to one signal and its complement to the other, thereby inducing posterior means $x_\varepsilon=1/4+\varepsilon$ and $y_\varepsilon=3/4-\varepsilon$, respectively. On the right, an optimal interval-partitional scheme uses the threshold $1/2+2\varepsilon$, inducing posterior means $x_\varepsilon=1/4+\varepsilon$ and $3/4+\varepsilon$. }
    \label{fig:two-signal-upper-bound}
\end{figure}

\section{The case of three or more signals}\label{sec:kge3}
We characterize the uniform price of explainability for $K \ge 3$ in the following theorem.

\begin{theorem}[{Price of explainability for $K \ge 3$}]\label{thm:kge3}
For every $K\ge3$ and every bounded utility $u:[0,1]\to[0,1]$,
\[
    \OPTpart^u(K)\ge\frac23\OPT^u(K).
\]
The factor $2/3$ is tight for every $K\ge3$.  Consequently,
$c_K^{\rm unif}=2/3$ for every $K\ge3.$
\end{theorem}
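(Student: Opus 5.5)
The plan follows the roadmap in the introduction. First, tightness: for $K\ge3$ the upper bound $c_K^{\rm unif}\le 2/3$ comes from the utility of \citet{ChenLinTangTuckerFoltz2026} supported at posterior means $1/3$ and $2/3$. An unrestricted scheme (using two signals, hence at most $K$) induces both means with probability $1/2$ each, since $[1/12,7/12]$ has mean $1/3$ and its complement has mean $2/3$. Any interval partition, however many pieces it has, can contain at most one interval centred at $1/3$ and one centred at $2/3$. These two intervals are disjoint, and each is contained in $[0,2/3]$ and $[1/3,1]$ respectively. A short check, which we import from that paper, bounds the total length by $2/3$ (up to limits). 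This gives the ratio $2/3$ for every $K$.

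For the lower bound, fix any scheme with at most $K$ signals. Merge signals with equal posterior means to obtain targets $x_1<\dots<x_m$, $m\le K$, with probabilities $p_i$. Whenever the sum of the $x_i$ over $i\le j$ is fixed, the mass assigned to the smallest targets is constrained exactly as in Lemma~\ref{lem:max-mass}. Concretely, $(p_i)$ lies in the convex compact set $F_m(S)$ cut out by Bayes plausibility and the prefix inequalities
\[
\sum_{i\le j}p_i x_i\ \ge\ \tfrac12\Bigl(\sum_{i\le j}p_i\Bigr)^2,\qquad j=1,\dots,m-1,
\]
i.e.\ the first $j$ targets carry mean at least that of the leftmost interval of the same mass. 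Equivalently, one can use the linear (supporting-line) form of these constraints. The goal is then a random family of disjoint intervals, with interval $I_i$ centred at $x_i$ or empty, such that $\E|I_i|\ge \tfrac23 p_i$ for all $i$, and such that in every realization at most $m-1\le K-1$ intervals are nonempty.

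Given such a packing, each realization extends to an interval partition with at most $2(m-1)+1$ pieces, which is too many. Instead, I use the omission as follows. Take the nonempty intervals sorted, and enlarge each gap into its neighbour symmetrically where possible; alternatively, argue via a convex-hull/domination step. The set of length vectors of genuine $K$-interval partitions, restricted to coordinates at the targets, has a downward-closed convex hull that contains every packing vector with at most $K-1$ nonempty centred intervals. The proof idea is that a packing with $r\le K-1$ intervals leaves $r+1$ gaps. Shrinking intervals toward their centres, and mixing two shifted partitions, absorbs the gaps using only one extra piece. Averaging over the packing then gives a mixture of partitions whose expected payoff is at least $\sum_i \tfrac23 p_i u(x_i)$, since $u\ge0$. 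Some partition in the mixture achieves at least the average, and taking the supremum over schemes finishes the proof.

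To build the packing, I reduce to extreme points of $F_m(S)$, since mixtures preserve both expectation guarantees and the support bound. At an extreme point, either some $p_i=0$, in which case I drop $x_i$ and recurse, or some proper prefix constraint is tight. If a prefix constraint is tight, the first $j$ targets exactly fill $[0,P_j]$ with $P_j=\sum_{i\le j}p_i$, and the rest fill $[P_j,1]$. After affine rescaling, each side is a smaller instance. The base cases are $m=1$ and $m=2$; for $m=2$, Theorem~\ref{thm:k2}'s construction already randomizes between the two one-sided intervals. I then couple the two recursive packings and concatenate them, losing a constant factor only in the way the interval count is controlled. The main obstacle is this concatenation step. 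Each side's packing omits one target, so the union omits two, but the guarantee $2/3$ must be preserved and not degrade through the recursion. I would first try to show that the base instances actually achieve $\ge\tfrac23 p_i$ while omitting one target with probability at most $1/3$ per target. I would then couple the two sides so that the combined packing still omits at least one target in every realization without shortening any expected length.
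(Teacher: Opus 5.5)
Your architecture matches the paper's: the prefix-inequality relaxation, reduction to extreme points, splitting at a tight prefix, coupling two blocks, passing from sparse centered packings to $K$-interval partitions through $\downconv W_K(S)$, and the $1/3$, $2/3$ tightness example. The gap is the two-point base case, which is the one nontrivial building block. Randomizing between the one-sided intervals $[0,2\mu_1]$ and $[2\mu_2-1,1]$ from Theorem~\ref{thm:k2} cannot give two thirds of both masses. Take $p=(1/2,1/2)$ at $1/4+\varepsilon$ and $3/4-\varepsilon$: both intervals have length $1/2+2\varepsilon$, so weights $s,1-s$ give expected lengths $s(1/2+2\varepsilon)$ and $(1-s)(1/2+2\varepsilon)$, and both reach $1/3$ only if $\varepsilon\ge1/12$. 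The same example shows that your top-level demand of at most $m-1$ intervals is unattainable for $m=2$; there you only need at most $K-1\ge2$, which is where $K\ge3$ enters.

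What is missing is a certificate that sometimes covers both targets at once. On $[0,1]$ write $a=1/2-u$, $b=1/2+v$, $d=u+v$, so that $\pi_1=v/d$, $\pi_2=u/d$, and the prefix inequality becomes $d\le1/2$. The intervals $[a-v,a+v]$ and $[b-u,b+u]$ touch and fit in $[0,1]$. Use this pair with probability exactly $1/3$. Use the one-sided intervals of lengths $A=1-2u$ and $B=1-2v$ with probabilities $2(1-d)\pi_1/(3A)$ and $2(1-d)\pi_2/(3B)$, and the empty packing otherwise. The expectations are then exactly $\tfrac23\pi_i$. The weights are admissible because $1-(1-d)(\pi_1/A+\pi_2/B)=(1-2d)(u-v)^2/\bigl(d(1-2u)(1-2v)\bigr)\ge0$.

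You also misdiagnose the concatenation step, and the invariant you propose is the wrong one. Omitting two targets is harmless, and blocks of size at least three already omit one target in every realization by induction. The real problem is that blocks of size one and two cannot always omit a target. So the induction should run over $m\ge3$, with sizes one and two certified separately, and the quantity to track is the probability that a block uses \emph{all} its targets. For a one-point block this probability is forced to be at least $2/3$, since its only interval has length at most $h$ and you need $2h/3$. Hence, in a $1+2$ split, the two-point block must use both targets with probability at most $1/3$, which is exactly what the construction above delivers. Since $2/3+1/3\le1$, the mixture $e_L(\mu_L^1\otimes\mu_R^0)+e_R(\mu_L^0\otimes\mu_R^1)+(1-e_L-e_R)(\mu_L^0\otimes\mu_R^0)$ of conditional laws preserves both marginals and makes the all-targets events disjoint.

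Your condition ``omit each target with probability at most $1/3$'' gives no such upper bound, and in general it cannot be met. With $u$ small and $d<1/4$, covering $b$ with probability $2/3$ caps the interval at $a$ by $2d$ on that event, so $\E[\Lambda_1]\le(4d+1)/3<\tfrac23\pi_1$.

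Two lesser points. First, the dichotomy ``zero mass or tight proper prefix'' at an extreme point needs $m\ge3$, so that the two Bayes-plausibility equalities leave a nonzero feasible direction. Second, the packing-to-partition step needs an actual argument, not ``shrinking and mixing two shifted partitions.'' For example, an extreme point of the polytope of centered packings on a fixed support of size $r$, with $k$ positive coordinates, has at least $k$ of its $r+1$ gaps equal to zero. Its positive intervals and positive gaps therefore form at most $r+1\le K$ pieces.
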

In the following, we provide an overview of the proof. Fix an arbitrary signaling scheme with at most $K$ signals. Let
$S=\{x_1,\ldots,x_m\},$ with $m\leq K$ and $0 < x_1 < \cdots < x_m < 1,$ be the set of distinct
posterior means induced by the signaling scheme, and let
$p\in\mathbb{R}_+^m$ be the vector defined so that $p_i$ is the
probability that the posterior mean $x_i$ is realized.  Thus, the sender's expected utility achieved by this signaling scheme is equal to $\sum_{i=1}^{m} p_i u(x_i).$
Under the
uniform-prior assumption, $p$ must satisfy a collection of necessary
conditions in order to be induced by a signaling scheme. We denote by $F_m(S)$ the
convex set of probability vectors satisfying these conditions (see Definition~\ref{def:outer} for the formal definition). Since
every signaling scheme with posterior means in $S$ induces a vector
in $F_m(S)$, this set provides an outer relaxation. 

Similarly, every interval-partitional signaling scheme with at most
$K$ intervals induces a vector $w\in\mathbb{R}_+^m$, whose $i$-th
coordinate is the total length of the partition intervals centered at
$x_i$. This coordinate is zero if no such interval is present. We
denote by $W_K(S)$ the set of all vectors induced in this way. Since $W_K(S)$ is generally nonconvex, we consider its downward
convex hull $\downconv W_K(S)$, namely, the set of all nonnegative
vectors that are coordinatewise dominated by some vector in
$\conv W_K(S)$. Our main goal is to prove that, for
every $p\in F_m(S)$,
\begin{equation}\label{eq:defdownconv}
    \frac{2}{3}p\in\operatorname{downconv} W_K(S).
\end{equation}
Indeed, if Equation~\eqref{eq:defdownconv} holds, then Lemma~\ref{lem:payoff} in
Section~\ref{sec:posterior} implies that there exists a vector in $W_K(S)$, induced
by an interval-partitional signaling scheme achieving utility at least
two thirds of $\sum_{i=1}^{m} p_i u(x_i).$ Therefore, the proof of
Theorem~\ref{thm:kge3} reduces to establishing Equation~\eqref{eq:defdownconv}.

To prove Equation~\eqref{eq:defdownconv}, we use centered packings as an
intermediate object. A centered packing is a vector
$\lambda\in\mathbb{R}_+^m$, where the coordinate $\lambda_i$
represents the length of an interval centered at $x_i$, and
$\lambda_i=0$ when no such interval is present. The intervals
corresponding to the positive coordinates of $\lambda$ are contained
in $[0,1]$ and have pairwise disjoint interiors. Unlike a vector
$w\in W_K(S)$, which is induced by a complete partition of $[0,1]$,
a centered packing may leave parts of the state space uncovered. Whenever
a centered packing is random, we denote it by $\Lambda$.

As a first step, Lemma~\ref{lem:sparse-hull} in
Section~\ref{sec:packings} shows that every finitely supported random
centered packing $\Lambda$ such that every realization uses intervals
centered at no more than $K-1$ elements of $S$ satisfies
\begin{equation}\label{eq:expected-packing-in-downconv}
    \E[\Lambda]\in\downconv W_K(S).
\end{equation}
Therefore, in the remainder of the proof, we focus on constructing a
finitely supported random centered packing $\Lambda$ such that every
realization uses intervals centered at no more than $m-1$ elements
of $S$. Since $m\leq K$, the hypothesis of
Lemma~\ref{lem:sparse-hull} is then satisfied, and
Equation~\eqref{eq:expected-packing-in-downconv} follows.

For $m\geq 3$, the main technical step shows that, for every
$p\in F_m(S)$, there exists a finitely supported random centered
packing $\Lambda$ such that
\begin{equation}\label{eq:coordinatewise-packing-domination}
    \mathbb{E}[\Lambda_i]\geq 2p_i/3
\end{equation}
for every $i$, and every realization of $\Lambda$ uses intervals centered at no more than $m-1$ elements of $S$. This result is established in
Theorem~\ref{thm:sparse-domination} in Section~\ref{sec:induction} through a recursive
argument on the number of elements of $S$. Finally, the definition of $\operatorname{downconv} W_K(S)$, together with
Equation~\eqref{eq:expected-packing-in-downconv} and
Equation~\eqref{eq:coordinatewise-packing-domination}, implies
Equation~\eqref{eq:defdownconv}.

\subsection{Posterior-mean distributions}\label{sec:posterior}
Let $ S=\{x_1,\ldots,x_m\}\subset (0,1) $. Any partition of
$[0,1]$ into at most $K$ intervals induces a vector
$w\in\mathbb{R}_+^m$, where $w_i$ is the total length of the partition
intervals whose midpoint is $x_i$, and $w_i=0$ if no such interval is
present. We denote by $W_K(S)$ the set of all vectors induced by such
partitions. Thus, every vector $w \in W_K(S)$ is induced by at least one interval-partitional signaling scheme using at most $K$ intervals.
Furthermore, for any $A\subseteq\mathbb{R}_+^m$, we define
\[
\operatorname{downconv} A
:=
\left\{
z\in\mathbb{R}_+^m :
z\leq y \text{ coordinatewise for some } y\in\operatorname{conv} A
\right\}.
\]

In what follows, the set $\operatorname{downconv} W_K(S)$, rather than
$W_K(S)$ itself, will play a central role. First,
$\operatorname{downconv} W_K(S)$ is convex, whereas $W_K(S)$ need not be,
as illustrated in Figure~\ref{fig:WkSet}. Second, every vector
$y \in \operatorname{conv} W_K(S)$ can be interpreted as the average vector
generated by randomizing over interval partitions. Therefore, if $y\in \operatorname{conv} W_K(S)$ dominates a vector
$z\in \operatorname{downconv} W_K(S)$ coordinatewise, then, for every nonnegative utility $u$, the sender's
expected payoff under this randomization is at least the payoff associated with
$z$. We formalize this observation in the following lemma.

\begin{lemma}
\label{lem:payoff}
If $\alpha p \in \operatorname{downconv} W_K(S)$ for some $\alpha \geq 0$, then, for every nonnegative utility $u\colon [0,1]\to[0,1]$, there exists an interval-partitional signaling scheme using at most $K$ intervals whose expected sender utility is at least $\alpha \sum_{i=1}^m p_i u(x_i).$
\end{lemma}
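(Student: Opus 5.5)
The argument unwinds the definitions and then uses linearity of the payoff in the length vector together with an averaging step.

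Since $\alpha p\in\downconv W_K(S)$, by definition there is $y\in\conv W_K(S)$ with $\alpha p\le y$ coordinatewise. By the definition of the convex hull, $y$ is a finite convex combination
\[
y=\sum_{j=1}^{r}\beta_j w^{(j)},
\qquad \beta_j\ge 0,\ \sum_{j=1}^r\beta_j=1,\ w^{(j)}\in W_K(S).
\]
No appeal to Carathéodory is needed, since convex hulls consist of finite combinations.

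Next, we lower bound the payoff of each partition. Each $w^{(j)}$ is induced by a partition of $[0,1]$ into at most $K$ intervals. Let $\sigma_j$ be the corresponding interval-partitional scheme. Each interval of the partition has mass equal to its length and posterior mean equal to its midpoint. Group the intervals whose midpoint equals some $x_i$. Their total mass is $w^{(j)}_i$, so they contribute exactly $w^{(j)}_i u(x_i)$ to the sender's utility. The remaining intervals have midpoints outside $S$ and contribute a nonnegative amount, because $u\ge 0$. Hence
\[
V(\sigma_j)\ \ge\ \sum_{i=1}^m w^{(j)}_i u(x_i),
\]
where $V$ denotes the expected sender utility. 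One subtlety to handle here is that distinct intervals with the same midpoint send distinct signals. Each still gives payoff $u(x_i)$ times its length, so this grouping is harmless.

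Finally, we average. Multiply the previous bound by $\beta_j$ and sum over $j$:
\[
\max_j V(\sigma_j)\ \ge\ \sum_j\beta_j V(\sigma_j)\ \ge\ \sum_{i=1}^m y_i u(x_i)\ \ge\ \alpha\sum_{i=1}^m p_i u(x_i).
\]
The last inequality uses $y\ge \alpha p$ and $u(x_i)\ge 0$. Choosing $j$ attaining the maximum gives the required scheme $\sigma_j$, which uses at most $K$ intervals.

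There is no real obstacle in this proof. The only point requiring care is the accounting in the second step: intervals with midpoints outside $S$, or several intervals sharing a midpoint, must be handled correctly, and nonnegativity of $u$ handles both.
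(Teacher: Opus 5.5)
Your proof is correct and follows essentially the same route as the paper. Both take a finite convex combination of partition vectors dominating $\alpha p$, bound each partition's value below by $\sum_i w^{(j)}_i u(x_i)$ using $u\ge 0$ for intervals with midpoints outside $S$, and select the best partition in the combination (the case of several intervals sharing a midpoint cannot actually occur, since two positive-length intervals with disjoint interiors cannot have the same midpoint, but guarding against it does no harm).
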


\begin{proof}
There is a finite convex combination of vectors in $ W_K(S)$ whose
average $y$ satisfies $y_i\ge\alpha p_i$ for every $i$.  The average contribution from partition intervals whose midpoints belong to $S$ is at least
\[
    \sum_i y_i u(x_i)
    \ge
    \alpha\sum_i p_i u(x_i).
\]
Intervals whose midpoints are outside $S$ contribute additional
nonnegative payoff.  Hence one deterministic partition in the convex
combination attains at least the displayed value.
\end{proof}

\begin{figure}[t]
    \centering
    \begin{tikzpicture}
\begin{groupplot}[
    group style={group size=2 by 1, horizontal sep=3.4cm},
    width=0.5\textwidth,
    height=0.36\textwidth,
    axis equal image,
    xmin=-0.02, xmax=0.86,
    ymin=-0.02, ymax=0.86,
    axis lines=left,
    xlabel={$w_1$},
    ylabel={$w_2$},
    xlabel style={font=\small},
    ylabel style={font=\small},
    tick label style={font=\small},
    xtick={0,0.4,0.8},
    ytick={0,0.4,0.8},
    clip=false,
]

\nextgroupplot[
    title style={font=\small},
    legend style={
        at={(0.98,0.98)},
        anchor=north east,
        font=\scriptsize,
        draw=gray!40,
        fill=white,
        fill opacity=0.85,
        text opacity=1,
        inner sep=2pt
    },
    legend cell align=left,
]

\addplot[
    draw=orange!85!black,
    fill=orange!70,
    fill opacity=0.22,
    line width=0.9pt
] coordinates {
    (0,0) (0.4,0) (0,0.8) (0,0)
};

\addplot[
    very thick,
    blue!75!black,
    line cap=round
] coordinates {(0,0) (0.4,0)};

\addplot[
    very thick,
    blue!75!black,
    line cap=round
] coordinates {(0,0) (0,0.8)};

\addlegendimage{very thick, blue!75!black}
\addlegendentry{$\operatorname{downconv}(W_3(S))$}
\addlegendimage{area legend, draw=orange!85!black, fill=orange!70, fill opacity=0.22}
\addlegendentry{$W_3(S)$}

\nextgroupplot[
    title style={font=\small},
    legend style={
        at={(0.98,0.98)},
        anchor=north east,
        font=\scriptsize,
        draw=gray!40,
        fill=white,
        fill opacity=0.85,
        text opacity=1,
        inner sep=2pt
    },
    legend cell align=left,
]

\addplot[
    draw=orange!85!black,
    fill=orange!70,
    fill opacity=0.22,
    line width=0.9pt
] coordinates {
    (0,0) (0.6,0) (0.6,0.2) (0.2,0.6) (0,0.6) (0,0)
};

\addplot[
    very thick,
    blue!75!black,
    line cap=round
] coordinates {(0,0) (0.6,0)};

\addplot[
    very thick,
    blue!75!black,
    line cap=round
] coordinates {(0,0) (0,0.6)};

\addplot[
    only marks,
    mark=*,
    mark size=2.6pt,
    blue!75!black
] coordinates {(0.6,0.2) (0.2,0.6)};

\addlegendimage{very thick, blue!75!black}
\addlegendentry{$\operatorname{downconv}(W_3(S))$}
\addlegendimage{area legend, draw=orange!85!black, fill=orange!70, fill opacity=0.22}
\addlegendentry{$W_3(S)$}

\end{groupplot}
\end{tikzpicture}
    \caption{The sets $W_3(S)$ (blue) and their downward convex hulls (orange) for $S=\{1/5,2/5\}$ (left) and $S=\{3/10,7/10\}$ (right).}
 \label{fig:WkSet}
\end{figure}

Given a finite-signal signaling scheme, discard zero-probability signals
and merge signals inducing the same posterior mean. Let
$
0<x_1<\cdots<x_m<1
$
be the resulting distinct posterior means. The strict inequalities follow
from the atomlessness of the prior $\operatorname{Unif}[0,1]$. Then the signaling scheme induces the posterior-mean distribution
\[
    Q=\sum_{i=1}^m p_i\delta_{x_i},
\]
where $\delta_x$ denotes the Dirac probability measure at $x$, and $p_i >0$ is the probability that the induced posterior mean equals $x_i$.

We next identify the feasibility information about $Q$ that the proof
will use.  Bayes plausibility fixes the total mass and the overall mean, but
those two equalities ignore the order of the state space.  Under the uniform prior, part of the order information is captured by the following necessary prefix inequalities.  Define the prefix masses,
first moments, and slacks by
\[
    A_j:=\sum_{i=1}^j p_i,
    \qquad
    M_j:=\sum_{i=1}^j p_i x_i,
    \qquad
    g_j:=M_j-\frac{A_j^2}{2}.
\]

Here $p_i$ is the probability of posterior mean $x_i$.  The value of the
scheme is $\sum_i p_i u(x_i)$. We also set $A_0=M_0=g_0:=0$.  The next lemma shows that all implementable
posterior-mean distributions have nonnegative proper-prefix slacks.

\begin{lemma}[Prefix inequalities]\label{lem:prefix}
Every posterior-mean distribution induced by a signaling scheme under
$\Unif[0,1]$ satisfies
\[
    \sum_{i=1}^m p_i=1,
    \qquad
    \sum_{i=1}^m p_i x_i=\frac12,
\]
and, for every $j<m$,
$
    M_j\ge{A_j^2}/{2}.
$
\end{lemma}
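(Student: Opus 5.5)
The two equalities are the Bayes-plausibility conditions already recorded in Section~\ref{sec:setting}: summing $\Pr[s]$ over the merged signals gives total mass $1$, and summing $\Pr[s]\mu_s$ gives $\int_0^1\theta\diff\theta=1/2$. Merging signals with a common posterior mean preserves both sums, since $p_i x_i$ is exactly the sum of $\Pr[s]\mu_s$ over the signals $s$ merged into $x_i$. The only real content is therefore the prefix inequality, and the plan is to obtain it as the ``aggregate'' version of Lemma~\ref{lem:max-mass}.

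Fix $j<m$ and define
\[
q_j(\theta):=\sum_{i\le j}\ \sum_{s:\,\mu_s=x_i}\sigma(s\mid\theta).
\]
This function is measurable and takes values in $[0,1]$, because it is a partial sum of a Markov kernel. Its integral is $A_j$, and its first moment is
\[
\int_0^1\theta\,q_j(\theta)\diff\theta=\sum_{i\le j}p_i x_i=M_j.
\]
If $A_j=0$ the inequality is trivial. Otherwise, the rearrangement step inside the proof of Lemma~\ref{lem:max-mass} applies directly to $q_j$: among functions with values in $[0,1]$ and integral $A_j$, the first moment is minimized by $\mathbf 1_{[0,A_j]}$. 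This yields
\[
M_j\ge\int_0^{A_j}t\diff t=\frac{A_j^2}{2}.
\]
Note that we use the displayed inequality $\alpha m\ge\alpha^2/2$ from that proof, not its conclusion $\alpha\le 2m$ (they are equivalent since $\alpha>0$).

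The restriction $j<m$ is not needed for validity: at $j=m$ the slack is $1/2-1/2=0$, so the inequality holds with equality and carries no information. This is presumably why the lemma states it only for proper prefixes.

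I do not expect any step to be a genuine obstacle. The one point requiring care is that the bound must be applied to the aggregated function $q_j$, summing over all signals whose means lie among $x_1,\dots,x_j$, rather than to each signal separately. Applying Lemma~\ref{lem:max-mass} signal by signal would only give $p_i\le 2x_i$, which does not imply the quadratic prefix bound. Aggregation works because the rearrangement argument only needs $0\le q\le1$, and that is guaranteed by $\sum_s\sigma(s\mid\theta)=1$.
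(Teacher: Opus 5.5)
Your proposal is correct and matches the paper's proof: Bayes plausibility gives the two equalities, and the prefix inequality follows from applying the rearrangement step of Lemma~\ref{lem:max-mass} to the aggregated function $q_j$. This $q_j$ is the probability that the signal sent at $\theta$ has posterior mean in $\{x_1,\ldots,x_j\}$, and it satisfies $0\le q_j\le 1$, $\int_0^1 q_j=A_j$, and $\int_0^1\theta\,q_j(\theta)\diff\theta=M_j$.
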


\begin{proof}
The first two identities are Bayes plausibility~\citep{KamenicaGentzkow2011}.  Fix $j<m$, and let
$q_j(\theta)$ be the probability that the signal sent at state $\theta$
induces one of the posterior means $x_1,\ldots,x_j$.  Then
\[
    0\le q_j\le1,
    \qquad
    \int_0^1q_j(\theta)\diff \theta=A_j,
    \qquad
    \int_0^1\theta q_j(\theta)\diff \theta=M_j.
\]
The rearrangement argument in Lemma~\ref{lem:max-mass} gives
\[
    M_j\ge\int_0^{A_j}\theta\diff \theta=\frac{A_j^2}{2},
\]
concluding the proof.
\end{proof}

An exact characterization of implementability is unnecessary for our
argument.  We retain only the necessary conditions in the preceding lemma and
prove the stronger domination theorem on the larger set that they define.
This outer relaxation has the convex structure needed by the induction.

\begin{definition}[Outer feasibility]\label{def:outer}
Fix an ordered target set $ S = \{x_1,\ldots,x_m\},$ with $ 0 < x_1 < \cdots < x_m < 1. $ A vector $p \in \mathbb{R}_+^m$ is \emph{outer feasible} if
\[
    \sum_{i=1}^m p_i=1,
    \qquad
    \sum_{i=1}^m p_ix_i=\frac12,
    \qquad
    g_j\ge0
    \quad(j<m).
\]
We denote by $F_m(S)$ the set of all outer-feasible probability vectors
indexed by $S$, allowing zero coordinates. For an interval $I=[\ell,\ell+h]$ with $h>0$, a finite measure
\[
    Q_I=\sum_{i=1}^m p_i\delta_{y_i},
    \qquad
    \ell<y_1<\cdots<y_m<\ell+h,
    \qquad
    p_i>0,
    \qquad
    \sum_{i=1}^m p_i=h,
\]
is outer feasible on $I$ if the probability measure
\[
    \widehat Q_I
    :=
    \sum_{i=1}^m\frac{p_i}{h}
    \delta_{(y_i-\ell)/h}
\]
is outer feasible on $[0,1]$.
\end{definition}

Every posterior-mean distribution induced by a signaling scheme is outer
feasible by Lemma~\ref{lem:prefix}.  From this point onward, the geometric
argument is carried out for all outer-feasible vectors, so no further
properties of signaling schemes are used until the final payoff reduction.

\subsection{Centered packings}\label{sec:packings}

In this section, we show how centered packings can be used in place of
$K$-interval partitions. More precisely, we show that every sufficiently
sparse centered packing, namely, every centered packing whose support has
cardinality at most $K-1$, belongs to
$\downconv W_K(S)$. Thus, although centered packings are more
flexible objects and may leave parts of $[0,1]$ uncovered, they can still
be coordinatewise dominated by convex combinations of vectors induced by
interval partitions using at most $K$ intervals. This allows us to carry
out the geometric construction using centered packings and return to interval
partitions only at the end.

Let $I=[\ell,\ell+h]$, where $h>0$ and $0\leq \ell<\ell+h\leq 1$, and let
$y_1<\ldots<y_r$ be points in the interior of $I$. A \emph{centered packing} in $I$ for the set $\{y_1,\ldots,y_r\}$ is a
vector $\lambda\in\mathbb{R}_+^r$ such that the intervals
\[
    \left[
        y_i-\frac{\lambda_i}{2},
        y_i+\frac{\lambda_i}{2}
    \right],
    \qquad i=1,\ldots,r,
\]
are contained in $I$ and have pairwise disjoint interiors. The support of a
centered packing $\lambda$ is
$\operatorname{supp}(\lambda) :=\{i\in\{1,\ldots,r\}:\lambda_i>0\}.$
The zero vector is also a valid centered packing. 

We denote by $\Lambda$ a random centered packing in $I$. We say that $\Lambda$ has a \emph{finitely supported distribution} if there exists a finite collection $\mathcal{C}$ of centered packings in $I$ such that $\Pr(\Lambda\in\mathcal{C})=1. $ Furthermore, let $T_I\colon I\to[0,1]$ be the affine map defined by
\[
    T_I(t):=\frac{t-\ell}{h}.
\]
Then a centered packing $\widehat{\lambda}\in\mathbb{R}_+^r$ in
$[0,1]$ for the target points
$T_I(y_1),\ldots,T_I(y_r)$ is mapped to the centered packing
$\lambda:=h\widehat{\lambda}$
in $I$ for the target points $y_1,\ldots,y_r$.

The next lemma establishes a connection between centered packings in $[0,1]$ and interval partitions. It shows that every centered packing supported on at most $K-1$ elements of $S$ belongs to $\operatorname{downconv} W_K(S)$. Consequently, the expectation of any random centered packing with a
finitely supported distribution that satisfies the same support bound
almost surely also belongs to $\downconv W_K(S)$.

\begin{lemma}\label{lem:sparse-hull}
Let $S=\{x_1,\ldots,x_m\}\subset (0,1)$. Every centered packing $\lambda$ in $[0,1]$ satisfying
\[
    |\supp(\lambda)|\leq K-1
\]
belongs to $\downconv W_K(S)$. Consequently, if $\Lambda$ is a random centered packing in $[0,1]$ with a finitely supported distribution and $ \lvert\operatorname{supp}(\Lambda)\rvert\leq K-1 $ \text{almost surely}, then $ \mathbb{E}[\Lambda] \in \operatorname{downconv} {W}_K(S). $
\end{lemma}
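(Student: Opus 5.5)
The plan is to prove the deterministic statement first and then obtain the random one by linearity. Fix a centered packing $\lambda$ in $[0,1]$ with $R:=\supp(\lambda)$, $r:=|R|\le K-1$. Zero coordinates of $\lambda$ play no role, so I work only with the targets $x_i$, $i\in R$, listed in increasing order. Consider the polytope
\[
    P_R:=\Bigl\{\nu\in\R_+^m:\ \nu_i=0\ (i\notin R),\ \tfrac{\nu_{i_1}}{2}\le x_{i_1},\ x_{i_k}+\tfrac{\nu_{i_k}}{2}\le x_{i_{k+1}}-\tfrac{\nu_{i_{k+1}}}{2}\ (1\le k<r),\ x_{i_r}+\tfrac{\nu_{i_r}}{2}\le1\Bigr\},
\]
where $i_1<\dots<i_r$ enumerate $R$. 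Every $\nu\in P_R$ is a centered packing, since the intervals stay in $[0,1]$, keep their order and have disjoint interiors. $P_R$ is defined by finitely many linear inequalities and is bounded, because $\nu_i\le1$. So it is a polytope containing $\lambda$. By Minkowski's theorem (Krein--Milman in finite dimension), $\lambda$ is a convex combination of vertices of $P_R$. It therefore suffices to show that every vertex $\nu$ of $P_R$ lies in $W_K(S)$, or is dominated by an element of it. Then $\lambda\in\conv W_K(S)\subseteq\downconv W_K(S)$.

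\textbf{Vertex structure.} This is the key step. At a vertex, $r$ linearly independent constraints are tight, because there are $r$ free variables. Let $z$ be the number of indices $i\in R$ with $\nu_i=0$ tight, and let $r':=r-z$ be the number of genuine intervals. The remaining $r'$ tight constraints come from the $r+1$ ``gap'' constraints: the left boundary, the $r-1$ adjacency constraints and the right boundary.

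Consider the $r'$ positive-length intervals $J_1<\dots<J_{r'}$ of $\nu$. The complement of $\bigcup J_k$ in $[0,1]$ consists of at most $r'+1$ gaps: before $J_1$, between consecutive $J_k$, and after $J_{r'}$. A zero-length interval sits at a single point and does not split a gap.

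I must check that a tight constraint involving a zero coordinate does not ``waste'' a gap. If $\nu_{i_k}=0$ and $x_{i_k}+\nu_{i_k}/2=x_{i_{k+1}}-\nu_{i_{k+1}}/2$, then the point $x_{i_k}$ touches the next interval. A chain of such tight constraints across a run of zero coordinates forces every one of them to be tight. So the independent tight gap constraints number at most the number of gaps of length zero between genuine intervals (or the boundary). The rank condition then gives at least $r'$ zero gaps among the $r'+1$, so at most one gap has positive length. This counting through chains of zero coordinates is the part I expect to require the most care, and I would handle it by contracting each maximal run of zero coordinates.

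\textbf{Building the partition.} Given such a vertex, form the partition of $[0,1]$ consisting of $J_1,\dots,J_{r'}$ together with the (at most one) nonempty gap. This uses at most $r'+1\le r+1\le K$ intervals. It induces a vector $w\in W_K(S)$ with $w_i\ge\nu_i$ for every $i$:
\begin{itemize}
\item each $J_k$ is centered at its target;
\item if the gap's midpoint happens to lie in $S$, it only increases a coordinate;
\item coinciding midpoints add lengths.
\end{itemize}
Hence $\nu\in\downconv W_K(S)$. Since $\downconv W_K(S)$ is convex, $\lambda\in\downconv W_K(S)$.

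\textbf{Random case.} For the random statement, $\Lambda$ takes finitely many values $\lambda^{(1)},\dots,\lambda^{(n)}$ with probabilities $\pi_j$, each with support of size at most $K-1$. By the above, each $\lambda^{(j)}\in\downconv W_K(S)$. Then $\E[\Lambda]=\sum_j\pi_j\lambda^{(j)}$ is a convex combination of points of this convex set, so $\E[\Lambda]\in\downconv W_K(S)$.
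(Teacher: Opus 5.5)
Your overall strategy matches the paper's: restrict to the support, view the packings as a bounded polytope, reduce to its vertices, count tight constraints to bound the number of uncovered gaps, and fill the gaps to get a partition. The problem is the step you flagged. The claim that at a vertex at most one merged gap has positive length is false once zero coordinates are present. Equivalently, it is false that the independent tight gap constraints number at most the zero merged gaps.

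A counterexample: let $R$ consist of three consecutive targets $0.3<0.5<0.7$ (so $r=3$, admissible for $K\ge4$), and put $\nu=(0.4,0,0.4)$ on them. The tight constraints are $\nu_{i_2}=0$, $0.3+\nu_{i_1}/2=0.5-\nu_{i_2}/2$, and $0.5+\nu_{i_2}/2=0.7-\nu_{i_3}/2$. Their normals are proportional to $(0,1,0)$, $(1,1,0)$, $(0,1,1)$, which are independent, so $\nu$ is a vertex of $P_R$. Its genuine intervals are $[0.1,0.5]$ and $[0.5,0.9]$, and both $[0,0.1]$ and $[0.9,1]$ are uncovered. The zero-length target at $0.5$ touches both neighbours, so it supplies two independent tight constraints while creating only one zero merged gap. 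Hence your family ``$J_1,\dots,J_{r'}$ plus the at most one nonempty gap'' need not cover $[0,1]$. In fact no partition into $r'+1=3$ pieces dominates this $\nu$: the centered lengths $a,b\ge0.4$ at $0.3$ and $0.7$ must satisfy $a+b\le0.8$, which forces exactly these two intervals and leaves two gaps.

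The repair is the paper's count, which does not merge gaps. Exactly $z$ nonnegativity constraints are tight and at least $r$ constraints are tight. So at least $r-z=r'$ of the $r+1$ fine gap constraints are tight. These are one for each pair of consecutive targets of $R$, zero-length ones included, plus the two boundary constraints. Hence at most $z+1$ fine gaps are positive. The $r'$ genuine intervals together with the positive fine gaps then partition $[0,1]$ into at most $r'+z+1=r+1\le K$ pieces; merging adjacent gaps only lowers the count. The slack $z$ in your own bound $r'+1\le r+1$ is exactly what absorbs the extra gaps. The rest of your argument is fine: boundedness, Minkowski, convexity of $\downconv W_K(S)$, and the random case.
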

\begin{proof}
Let $\lambda\in\R_+^m$ be a centered packing in $[0,1]$ such that
$
    |\supp(\lambda)|\leq K-1.
$
If $\lambda$ is the zero vector, the claim is immediate. Otherwise, write
\[
    J:=\supp(\lambda)=\{j_1<\cdots<j_r\},
\]
so that $r\leq K-1$, and set
$y_q:=x_{j_q} $, for every $ q=1,\ldots,r.$
The set of all centered packings in $[0,1]$ for the target points
$y_1,\ldots,y_r$ is the bounded polytope
\[
\begin{split}
    P_J:=\bigl\{a\in\R_+^r:\;&
    a_1\leq 2y_1,\quad
    a_q+a_{q+1}\leq 2(y_{q+1}-y_q)
    \ \text{for }\, q\le r-1,\quad
    a_r\leq 2(1-y_r)
    \bigr\}.
\end{split}
\]
The polytope $P_J$ is full-dimensional, since sufficiently small positive
values of $a_1,\ldots,a_r$ satisfy all its defining inequalities strictly.
Since a bounded polytope is the convex hull of its extreme points and
$\downconv W_K(S)$ is convex, it is enough to consider an extreme
point $a\in P_J$.

Suppose that exactly $k$ of the $r$ coordinates of $a$ are positive.
Then $r-k$ nonnegativity constraints are active. Since $P_J$ is
full-dimensional, an extreme point must satisfy $r$ linearly independent
active constraints. Hence at least $k$ of the remaining $r+1$ constraints
defining $P_J$ are active.

We call a {gap} any portion of $[0,1]$ left uncovered between two
consecutive centered intervals, or between an endpoint of $[0,1]$ and the
nearest centered interval. The lengths of the $r+1$ gaps determined by
$a$ are
\[
\begin{aligned}
    \gamma_0
        &:= y_1-\frac{a_1}{2},\\
    \gamma_q
        &:= y_{q+1}-y_q-\frac{a_q+a_{q+1}}{2},
        && q=1,\ldots,r-1,\\
    \gamma_r
        &:= 1-y_r-\frac{a_r}{2}.
\end{aligned}
\]
An active constraint is equivalent to the corresponding gap having length
zero. Therefore, at least $k$ of the $r+1$ gaps have length zero, and at
most $r+1-k$ gaps have positive length.

The $k$ positive centered intervals, together with the gaps of positive
length, form a partition of $[0,1]$ into at most
\[
    k+(r+1-k)=r+1\leq K
\]
nonempty intervals. This partition induces a vector
$w\in W_K(S)$. For every $q=1,\ldots,r$, the centered interval
of length $a_q$ has midpoint $y_q=x_{j_q}$, while the added gaps may
contribute additional nonnegative lengths at other targets in $S$.
Therefore, after restoring zero coordinates outside $J$, $w$
coordinatewise dominates $a$. Hence every extreme point of $P_J$
belongs to $\downconv W_K(S)$, and so does every point of $P_J$
by convexity.

Since $(\lambda_{j_1},\ldots,\lambda_{j_r})\in P_J,$
the first claim follows. For the second claim, every realization of
$\Lambda$ belongs to $\downconv W_K(S)$ by the first claim.
Since $\Lambda$ has a finitely supported distribution and
$\downconv W_K(S)$ is convex,
$
    \E[\Lambda]\in\downconv W_K(S).
$
\end{proof}

Consequently, it remains to construct a random centered packing $\Lambda$ in $[0,1]$, with a finitely supported distribution, such that $ \mathbb{E}[\Lambda_i]\geq {2p_i}/{3}, $ for every $i=1,\ldots,m, $ and $ \lvert\operatorname{supp}(\Lambda)\rvert\leq m-1$ almost surely.

\subsection{The one- and two-point building blocks}\label{sec:blocks}

A tight prefix will split an outer-feasible measure into adjacent blocks.
The induction therefore needs explicit certificates for blocks of sizes one
and two.  Besides the two-thirds expectation, we track the event on which a
block uses all of its targets.  These exceptional probabilities are chosen so
that the certificates of two adjacent blocks can later be coupled without
both being exceptional simultaneously.

For a one-point block, outer feasibility fixes the target at the midpoint of
the block.  We deliberately use the whole block only with probability
$2/3$; the remaining probability creates the slack needed when this block is
paired with a two-point block.

\begin{lemma}[One-point block]\label{lem:one-point}
Let $Q_I=h\delta_y$ be outer feasible on
$I=[\ell,\ell+h]$.  There is a finitely supported random centered packing $\Lambda$ in
$I$ such that
\[
    \E[\Lambda]=\frac23h,
\]
and $ \Pr\bigl(|\supp(\Lambda)|=1\bigr)=2/3.$
On the complementary event the packing is empty.
\end{lemma}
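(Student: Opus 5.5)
We first pin down the target using outer feasibility. With $m=1$, Definition~\ref{def:outer} says $Q_I=h\delta_y$ is outer feasible on $I=[\ell,\ell+h]$ exactly when the rescaled measure $\widehat Q_I=\delta_{(y-\ell)/h}$ is outer feasible on $[0,1]$. Bayes plausibility then forces $(y-\ell)/h=1/2$, so $y=\ell+h/2$ is the midpoint of $I$. There are no proper prefixes, so no prefix inequality has to be checked.

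Next we define $\Lambda$ as a random variable on two values. With probability $2/3$ we set $\Lambda=h$, and with probability $1/3$ we set $\Lambda=0$.

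We then check that both realizations are centered packings in $I$ for the single target $y$.
\begin{itemize}
\item The value $\lambda=h$ gives the interval $[y-h/2,\,y+h/2]=[\ell,\ell+h]=I$. This lies in $I$, and with only one interval the disjoint-interiors condition holds trivially.
\item The zero vector is a valid centered packing by definition.
\end{itemize}
So the distribution is supported on the finite set $\{0,h\}$ of centered packings.

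Finally, we verify the stated properties.
\begin{itemize}
\item The expectation is $\E[\Lambda]=\tfrac23 h+\tfrac13\cdot 0=\tfrac23 h$.
\item The event $|\supp(\Lambda)|=1$ coincides with the event $\Lambda=h$, because $h>0$. It therefore has probability $2/3$.
\item On the complementary event $\Lambda=0$, the packing is empty.
\end{itemize}

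This argument has no real obstacle. The only point needing care is the first step, where outer feasibility forces $y$ to be the midpoint, so that the full block is itself a valid centered interval.
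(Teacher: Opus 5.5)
Your proof is correct and matches the paper's argument: outer feasibility (Bayes plausibility with a single atom) forces $y=\ell+h/2$, and the certificate uses the whole block $I$ with probability $2/3$ and the empty packing with probability $1/3$. Your explicit checks of the packing conditions and of the event $|\supp(\Lambda)|=1$ spell out details that the paper leaves implicit.
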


\begin{proof}
Outer feasibility forces $y=\ell+h/2$.  Take the whole interval $I$ with
probability $2/3$, and the empty packing with probability $1/3$.
\end{proof}

The two-point block is the only algebraic building block.  We need to preserve
exactly two thirds of both masses while ensuring that both targets are used
together with probability only $1/3$.  This $1/3$ matches the probability
with which the preceding one-point certificate is empty and is what makes a
$1+2$ split possible.

\begin{lemma}\label{lem:two-point}
Let $Q_I=p_1\delta_{y_1}+p_2\delta_{y_2}$ be a positive two-point outer feasible measure on an interval
$I=[\ell,\ell+h]$.  There is a finitely supported random centered packing
$\Lambda=(\Lambda_1,\Lambda_2)$ in $I$ such that
\[
    \E[\Lambda_i]=\frac23p_i,\quad \forall i=1,2,
\]
and $\Pr\bigl(|\supp(\Lambda)|=2\bigr)=1/3.$
\end{lemma}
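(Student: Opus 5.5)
We rescale first. Apply $T_I$: $\widehat Q_I=p_1'\delta_{x_1}+p_2'\delta_{x_2}$ with $p_i'=p_i/h$ is outer feasible on $[0,1]$. A centered packing $\widehat\Lambda$ in $[0,1]$ maps to the packing $h\widehat\Lambda$ in $I$, with the same support. So it suffices to treat $I=[0,1]$, $p_1+p_2=1$, $p_1x_1+p_2x_2=1/2$ and $x_1\ge p_1/2$.

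We parametrize by $d:=x_2-x_1>0$. Bayes plausibility gives $x_1=\tfrac12-p_2d$ and $x_2=\tfrac12+p_1d$. The single prefix inequality $x_1\ge p_1/2$ becomes $p_2(1-2d)\ge0$, that is, $d\le 1/2$. The centered packings for the two targets form the polytope $P$ of Lemma~\ref{lem:sparse-hull}:
\[
a_1\le 2x_1=1-2p_2d,\qquad a_2\le 2(1-x_2)=1-2p_1d,\qquad a_1+a_2\le 2d.
\]
Shrinking any coordinate of a centered packing keeps it a centered packing. Hence it suffices to build $\Lambda$ with $\E[\Lambda_i]\ge\tfrac23p_i$ and $\Pr(|\supp\Lambda|=2)=1/3$. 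Afterwards we scale each coordinate down on the events where it is positive, which gives exact equality without changing supports; if $\E[\Lambda_i]$ exceeds $\tfrac23p_i$ strictly, only a partial shrink is needed.

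The construction mixes three packings.
\begin{itemize}
\item With probability $1/3$, take the two-interval packing $(2dp_1,\,2dp_2)$. It lies in $P$, since $2dp_1\le1-2p_2d$ is equivalent to $2d\le1$, symmetrically for the other coordinate, and the sum equals $2d$.
\item With probability $s$, take the left packing $(1-2p_2d,\,0)$, the interval $[0,2x_1]$.
\item With probability $t=\tfrac23-s$, take the right packing $(0,\,1-2p_1d)$, the interval $[2x_2-1,1]$.
\end{itemize}
The two-target event then has probability exactly $1/3$, as required. The expectation conditions read
\[
s\ \ge\ \frac{\tfrac23p_1(1-d)}{1-2p_2d},\qquad t\ \ge\ \frac{\tfrac23p_2(1-d)}{1-2p_1d}.
\]
We may assume $d<1/2$; the case $d=1/2$ is degenerate and gives lower bounds $p_1/(3p_1)$ and so on, which are handled directly.

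The main obstacle is showing that these two lower bounds can be met with $s+t=2/3$, that is,
\[
f(p_1,d):=\frac{p_1(1-d)}{1-2p_2d}+\frac{p_2(1-d)}{1-2p_1d}\ \le\ 1\qquad(p_1+p_2=1,\ 0<d\le\tfrac12).
\]
The inequality holds with equality at $d=0$, at $d=1/2$, and at $p_1=p_2$. I would first try to clear denominators and reduce to a polynomial inequality in $p_1p_2$ and $d$, using $p_1+p_2=1$. If this inequality fails somewhere, there is a fallback. Replace the two-interval packing by $(2d\beta,\,2d(1-\beta))$ and choose $\beta\in[0,1]$ to balance the two constraints, keeping the bounds $2d\beta\le 1-2p_2d$ and $2d(1-\beta)\le1-2p_1d$. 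Then verify that the resulting minimal $s+t$ is at most $2/3$, for example by checking that the sum of the two requirements is linear in $\beta$ and evaluating it at the optimal endpoint. Once feasible $s,t\ge0$ are found, rescaling back to $I$ concludes the proof.
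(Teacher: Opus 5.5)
Your proposal follows the paper's construction exactly, but it leaves unproved the one inequality on which the whole lemma rests. In the paper's notation, $u=p_2d$ and $v=p_1d$, and $\pi_i=p_i$ on $[0,1]$. The correspondence is as follows:
\begin{itemize}
\item Your two-interval packing $(2dp_1,2dp_2)$ is the paper's $q=(2v,2u)$, also used with probability $1/3$.
\item Your left and right packings are the paper's $(A,0)$ and $(0,B)$, with $A=1-2u$ and $B=1-2v$.
\item Your lower bounds on $s$ and $t$ are the paper's $s_1$ and $s_2$.
\item Your $f(p_1,d)$ is the paper's $(1-d)(\pi_1/A+\pi_2/B)$.
\end{itemize}
The gap is that you never prove $f\le 1$. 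Rescaling, membership of the three vectors in $P$, and the support count are routine. You only conjecture $f\le1$ and hedge with a $\beta$-fallback that is itself unverified, so as written this is not yet a proof.

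The direct route you suggest does close it. With $p_1+p_2=1$ one has $(1-2p_2d)(1-2p_1d)=1-2d+4p_1p_2d^2$ and $p_1(1-2p_1d)+p_2(1-2p_2d)=1-2d+4p_1p_2d$. Expanding $(1-2d+4p_1p_2d^2)-(1-d)(1-2d+4p_1p_2d)$ gives
\[
1-f(p_1,d)=\frac{d(1-2d)(1-4p_1p_2)}{(1-2p_2d)(1-2p_1d)}=\frac{d(1-2d)(p_1-p_2)^2}{(1-2p_2d)(1-2p_1d)}\ge0.
\]
After substitution this is the paper's identity $\frac{(1-2d)(u-v)^2}{d(1-2u)(1-2v)}$. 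Since $d\le1/2$, the denominators satisfy $1-2p_2d\ge p_1>0$ and $1-2p_1d\ge p_2>0$. So $d=1/2$ needs no separate treatment, and the fallback can be dropped.

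The only remaining difference is cosmetic. The paper sets the probabilities equal to the lower bounds and puts the leftover mass $\tfrac23(1-f)$ on the empty packing, which gives exact equality directly. You instead saturate $s+t=2/3$ and then shrink coordinates. Both are valid.
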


\begin{proof}
By affine rescaling, it is enough to consider $I=[0,1]$.  Let
\[
    Q=\pi_1\delta_a+\pi_2\delta_b,
    \quad \textnormal{where} \quad
    a=\frac12-u<\frac12<b=\frac12+v,
\]
for some $ u,v>0 $ and set $d:=u+v=b-a.$ Bayes plausibility requires $\pi_1 a+\pi_2 b=1/2$. Substituting $a=1/2-u$ and $b=1/2+v$ and observing that $\pi_1+\pi_2=1 $, we obtain
\[
    -u\pi_1+v\pi_2=0.
\]
Hence, recalling that $d=u+v$,
we have $\pi_1={v}/{d}$, and $ \pi_2={u}/{d }.$
The first prefix inequality is $\pi_1\le2a$, or equivalently
\[
    \frac vd\le1-2u
    \quad\Longleftrightarrow\quad
    u(1-2d)\ge0.
\]
Since $u>0$, this implies $d\le1/2$.
Set $A:=2a=1-2u,$ and $B:=2(1-b)=1-2v.$ Then $A,B>0$, and $(A,0)$ and $(0,B)$ are centered packings with support size $1$.  Furthermore, the vector $q=(2v,2u)$ corresponds to the intervals
\[
    I_a=[a-v,a+v]
    \qquad\text{and}\qquad
    I_b=[b-u,b+u],
\]
centered at $a$ and $b$, respectively. These intervals touch, since $ a+v=b-u.$
Moreover, they are both contained in $[0,1]$. Indeed, since
$d=u+v\leq 1/2$,
\[
    2v\leq 1-2u=A
    \qquad\text{and}\qquad
    2u\leq 1-2v=B.
\]
Equivalently, $ a-v\geq 0$ and $b+u\leq 1.$
Therefore, $q$ is a centered packing with support size $2$.
Use $q$ with probability $1/3$, use $(A,0)$ with probability
\[
    s_1:=\frac{2(1-d)\pi_1}{3A},
\]
and use $(0,B)$ with probability
\[
    s_2:=\frac{2(1-d)\pi_2}{3B}.
\]
To verify that the assigned probabilities are well defined, it is sufficient to show that the following inequality is satisfied:
\[
1-(1-d)\left(\frac{\pi_1}{A}+\frac{\pi_2}{B} \right) \ge 0.
\]
After substituting the definitions of $\pi_1,\pi_2,A,$ and $B$, and bringing
this expression to the common denominator
$d(1-2u)(1-2v)$, its numerator is
\begin{align*}
    d(1-2u)(1-2v)
    -(1-d)\bigl[v(1-2v)+u(1-2u)\bigr]=(1-2d)(u-v)^2,
\end{align*}
where $d=u+v$.  Therefore
\[
    1-(1-d)\left(\frac{\pi_1}{A}+\frac{\pi_2}{B}\right)
    =
    \frac{(1-2d)(u-v)^2}
    {d(1-2u)(1-2v)}
    \ge0
\]
implies $ 1/3+s_1+s_2\le1.$ Assign the remaining probability to the empty packing.  Finally,
\[
    s_1(A,0)+s_2(0,B)
    =
    \frac{2(1-d)}{3}(\pi_1,\pi_2),
\]
and therefore
\[
    \frac13q+s_1(A,0)+s_2(0,B)
    =
    \frac23(\pi_1,\pi_2).
\]
Thus the expected lengths are exactly two thirds of the target masses, and the
only outcome using both targets has probability $1/3$.  Under the inverse
affine map, lengths are multiplied by $h$ and
$p_i=h\pi_i$, so the expected original-scale lengths are
$h(2\pi_i/3)=2p_i/3$.  This proves the result on~$I$.
\end{proof}

When two block certificates are concatenated independently, they may both use
all of their own targets, causing the combined packing to use every target of
the original instance.  The next elementary coupling lemma prevents this.  It
shows that whenever the two exceptional probabilities sum to at most one, the
marginal certificate laws can be preserved while making the exceptional events
disjoint.

\begin{lemma}[Disjoint-exception coupling]\label{lem:coupling}
Let $I_L$ and $I_R$ be adjacent intervals.  For $b\in\{L,R\}$, let
$\Lambda^b$ be a finitely supported random centered packing in $I_b$ on
$r_b$ targets lying in the interior of $I_b$, and let $E_b$ be an event
such that
\[
    |\supp(\Lambda^b)|\le r_b-1
    \quad\text{on }E_b^c.
\]
If
\[
    \Pr(E_L)+\Pr(E_R)\le1,
\]
then the two packings can be coupled, without changing either marginal law, so
that their concatenation is a centered packing and
\[
    |\supp(\Lambda^L)|+|\supp(\Lambda^R)|
    \le r_L+r_R-1
\]
almost surely.
\end{lemma}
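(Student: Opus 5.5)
We will realize both packings as measurable functions of a single uniform random variable $U\sim\Unif[0,1]$, arranging the exceptional events as disjoint subintervals: $E_L$ at the bottom of $[0,1]$ and $E_R$ at the top. Since each $\Lambda^b$ has a finitely supported distribution, we may first pass to the joint law of $(\Lambda^b,\mathbf 1_{E_b})$. This is a finite distribution on pairs (packing, indicator), and it preserves the marginal law of $\Lambda^b$. The hypothesis ``$|\supp(\Lambda^b)|\le r_b-1$ on $E_b^c$'' then becomes a property of the support of this finite joint law, so we may forget the original probability space.

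The construction proceeds as follows. For the left block, list the atoms of $(\Lambda^L,\mathbf 1_{E_L})$ with the atoms carrying indicator $1$ first. Then assign to them consecutive subintervals of $[0,1]$ of lengths equal to their probabilities, starting from $0$. Define $\tilde\Lambda^L:=\Lambda^L$-value of the atom whose subinterval contains $U$, and set $\tilde E_L:=\{U<\Pr(E_L)\}$. For the right block, do the same but place the indicator-$1$ atoms last, so that $\tilde E_R:=\{U\ge 1-\Pr(E_R)\}$. Each $\tilde\Lambda^b$ then has the same law as $\Lambda^b$, and the property ``support size at most $r_b-1$ off $\tilde E_b$'' is inherited atom by atom.

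By construction, $\Pr(\tilde E_L\cap\tilde E_R)=\Pr(1-\Pr(E_R)\le U<\Pr(E_L))=0$, because $\Pr(E_L)\le 1-\Pr(E_R)$. Hence almost surely at least one of $\tilde E_L^c,\tilde E_R^c$ holds. On $\tilde E_L^c$ we have $|\supp\tilde\Lambda^L|\le r_L-1$ and trivially $|\supp\tilde\Lambda^R|\le r_R$, giving total at most $r_L+r_R-1$. The case $\tilde E_R^c$ is symmetric.

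Finally, the concatenation (vector $(\tilde\Lambda^L,\tilde\Lambda^R)\in\R_+^{r_L+r_R}$ on the union of targets) is a centered packing in $I_L\cup I_R$. Every interval from $\Lambda^L$ lies in $I_L$ and every interval from $\Lambda^R$ lies in $I_R$. Adjacent intervals share at most an endpoint, so intervals from different blocks have disjoint interiors, while disjointness within each block holds realization-wise. The concatenation also remains finitely supported.

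The only real subtlety is the bookkeeping needed to ensure that the coupling preserves the marginal laws exactly while making the exceptional events disjoint. The quantile/ordering construction above handles this; the rest is immediate.
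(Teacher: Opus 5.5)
Your proof is correct and follows the paper's approach: you build a coupling in which the two exceptional events are disjoint, which is possible precisely because $\Pr(E_L)+\Pr(E_R)\le 1$, and then read off the support bound and the disjointness of interiors from the adjacency of $I_L$ and $I_R$. The only difference is how the coupling is realized. The paper writes the joint law directly as the mixture $e_L(\mu_L^1\otimes\mu_R^0)+e_R(\mu_L^0\otimes\mu_R^1)+(1-e_L-e_R)(\mu_L^0\otimes\mu_R^0)$ of conditional laws, whereas you place the exceptional atoms at opposite ends of a common uniform variable, which conveniently avoids conditioning on possibly null events.
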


\begin{proof}
Let $e_b=\Pr(E_b)$, and let $\mu_b^1$ and $\mu_b^0$ denote the
conditional laws of $\Lambda^b$ on $E_b$ and $E_b^c$, respectively.
If a conditioning event has probability zero, choose the corresponding
conditional law arbitrarily.
Its coefficient in the mixture below is then zero.
Construct the joint law as the mixture
\[
    e_L(\mu_L^1\otimes\mu_R^0)
    +e_R(\mu_L^0\otimes\mu_R^1)
    +(1-e_L-e_R)(\mu_L^0\otimes\mu_R^0),
\]
with zero-weight terms omitted.  The left and right marginals are, respectively,
\[
    e_L\mu_L^1+(1-e_L)\mu_L^0
    \qquad\text{and}\qquad
    e_R\mu_R^1+(1-e_R)\mu_R^0,
\]
which are the original laws.  Declare the left block exceptional only in the
first mixture component and the right block exceptional only in the second;
these exceptional events are disjoint.  In every component at least one block
is drawn from $\mu_b^0$, where it uses at most $r_b-1$ targets, while the
other block uses at most all of its targets.  This gives the support bound.
Finally, the two target sets are strictly separated by the common block
boundary.  Since each packing is contained in its own block, their intervals
have disjoint interiors after concatenation, so the result is a centered
packing.
\end{proof}

The building blocks and the coupling explain how to combine two smaller
instances.  It remains to justify that the boundary structure of the
outer-feasible set always produces exactly such smaller adjacent instances.

\subsection{A unified sparse-domination induction}\label{sec:induction}

The recursive step is driven by tight prefix inequalities.  When
$g_j=0$, the first $j$ posterior means and the remaining posterior means
lie on opposite sides of the split point $A_j$, and after affine rescaling
each side satisfies the same outer-feasibility conditions.  The following
lemma records this self-similar decomposition.

\begin{lemma}[Decomposition at a tight prefix]\label{lem:prefix-split}
Let
\[
    Q=\sum_{i=1}^m p_i\delta_{x_i},
    \qquad
    0<x_1<\cdots<x_m<1,
    \qquad
    p_i>0,
\]
be outer feasible.  Suppose that, for some $1\le j<m$,
\[
    g_j=0.
\]
Set $L:=A_j$ and $R:=1-L$.  Then the first $j$ points form an outer
feasible measure on $[0,L]$, and the remaining $m-j$ points form an outer
feasible measure on $[L,1]$.
\end{lemma}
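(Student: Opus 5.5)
The plan is a direct verification of Definition~\ref{def:outer} for each of the two rescaled measures. The key identity is that, under the affine change of variables, each prefix slack of a sub-block equals the corresponding slack $g_k$ of $Q$ divided by the squared block length. The tight prefix $g_j=0$ is exactly what makes the rescaled means come out to $1/2$.

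\textbf{Step 1: the points lie on the correct sides of $L$.} We must show $x_j<L<x_{j+1}$, so that each block's targets lie in the interior of its interval. Using $A_j^2-A_{j-1}^2=p_j(A_{j-1}+A_j)$, one gets the one-step recursion
\[
g_j=g_{j-1}+p_j\Bigl(x_j-\tfrac{A_{j-1}+A_j}{2}\Bigr).
\]
This recursion also holds for $j=1$, since $g_0=0$.
\begin{itemize}
\item Since $g_j=0$ and $g_{j-1}\ge0$, we get $x_j\le (A_{j-1}+A_j)/2<A_j=L$, using $p_j>0$.
\item Likewise $g_{j+1}=p_{j+1}\bigl(x_{j+1}-(A_j+A_{j+1})/2\bigr)$. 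If $j+1<m$, then $g_{j+1}\ge0$ by outer feasibility. If $j+1=m$, then $g_m=\tfrac12-\tfrac12=0$ by Bayes plausibility. In either case $x_{j+1}\ge (A_j+A_{j+1})/2>L$.
\end{itemize}
Combined with $0<x_1$ and $x_m<1$, this places $x_1,\ldots,x_j$ in $(0,L)$ and $x_{j+1},\ldots,x_m$ in $(L,1)$. The block masses are $A_j=L$ and $1-A_j=R$, both positive, as required.

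\textbf{Step 2: the left block.} Rescale by $t\mapsto t/L$ and divide the masses by $L$.
\begin{itemize}
\item The total mass is $A_j/L=1$.
\item The mean is $M_j/L^2=(L^2/2)/L^2=\tfrac12$, using $g_j=0$.
\item For $k<j$, the rescaled slack is $M_k/L^2-(A_k/L)^2/2=g_k/L^2\ge0$.
\end{itemize}

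\textbf{Step 3: the right block.} Use $t\mapsto(t-L)/R$ with masses $p_i/R$ for $i>j$. For $j\le k\le m$, write $A'_k=(A_k-L)/R$ and $M'_k=\bigl(M_k-M_j-L(A_k-L)\bigr)/R^2$, and substitute $M_j=L^2/2$.
\begin{itemize}
\item The total mass is $(1-L)/R=1$.
\item The mean is $\bigl(\tfrac12-\tfrac{L^2}{2}-LR\bigr)/R^2=\tfrac{R^2/2}{R^2}=\tfrac12$.
\item For $j<k<m$, expanding gives
\[
M'_k-\tfrac{(A'_k)^2}{2}=\frac{M_k-\tfrac{L^2}{2}-L(A_k-L)-\tfrac{(A_k-L)^2}{2}}{R^2}=\frac{M_k-\tfrac{A_k^2}{2}}{R^2}=\frac{g_k}{R^2}\ge0.
\]
\end{itemize}

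I expect Step~1 to be the only non-routine point. Outer feasibility on a block requires the targets to lie strictly inside it, and this is where $p_j,p_{j+1}>0$ and the neighbouring slacks are used. The case $j+1=m$ needs the separate observation that $g_m=0$ follows from Bayes plausibility. The algebra in Step~3 is the other place to check carefully, since the cross term $L(A_k-L)$ must cancel exactly.
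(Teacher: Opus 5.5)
Your proposal is correct and follows essentially the same route as the paper. Both arguments use the one-step slack recursion (your $(A_{j-1}+A_j)/2$ is the paper's $L-p_j/2$) to place the two supports on opposite sides of $L$, invoking $g_m=0$ when $j=m-1$. Both then show that the rescaled proper-prefix slacks of the two blocks equal $g_k/L^2$ and $g_k/R^2$, and that the tight prefix $g_j=0$ makes each rescaled block have mean $1/2$.
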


\begin{proof}
Outer feasibility gives
\[
    g_0=g_m=0,
    \qquad
    g_k\ge0\quad(k=1,\ldots,m-1).
\]
Since all masses are positive, $0<L<1$. The support separates at $L$.  Since $A_j=L$ and
$A_{j-1}=L-p_j$,
\[
    g_j-g_{j-1}
    =
    p_j\left(x_j-L+\frac{p_j}{2}\right).
\]
Because $g_j=0\le g_{j-1}$,
\[
    x_j\le L-\frac{p_j}{2}<L.
\]
Similarly,
\[
    g_{j+1}-g_j
    =
    p_{j+1}\left(x_{j+1}-L-\frac{p_{j+1}}{2}\right),
\]
where $g_m=0$ is used when $j=m-1$.  Since
$g_{j+1}\ge g_j=0$,
\[
    x_{j+1}\ge L+\frac{p_{j+1}}{2}>L.
\]
By monotonicity of the support, all left-block points lie in $(0,L)$, and all
right-block points lie in $(L,1)$.
Normalize the left block by dividing masses by $L$ and locations by $L$.
For every prefix ending at an original index $k\le j$, the normalized prefix
mass and first moment are $A_k/L$ and $M_k/L^2$.  Its slack is therefore
\[
    \frac{M_k}{L^2}
    -\frac12\left(\frac{A_k}{L}\right)^2
    =
    \frac{g_k}{L^2}\ge0.
\]
At $k=j$, this slack is zero and the normalized block has total mass one and
mean $1/2$.  Hence the left block is outer feasible on $[0,L]$.

Normalize the right block by translating locations by $L$, then dividing
masses and locations by $R$.  For every prefix ending at an original index
$k>j$, its normalized mass and first moment are
\[
    \frac{A_k-L}{R}
    \qquad\text{and}\qquad
    \frac{M_k-M_j-L(A_k-L)}{R^2}.
\]
Using $M_j=L^2/2$, the normalized slack is
\[
    \frac{M_k-M_j-L(A_k-L)}{R^2}
    -\frac12\left(\frac{A_k-L}{R}\right)^2
    =
    \frac{M_k-L^2/2-L(A_k-L)-(A_k-L)^2/2}{R^2}
    =
    \frac{g_k}{R^2}\ge0.
\]
At $k=m$, this slack is zero and the normalized block has total mass one and
mean $1/2$.  Hence the right block is outer feasible on $[L,1]$.
\end{proof}

We now have all ingredients for the central geometric statement.  It is
stronger than the desired payoff theorem in two ways: it preserves every
target mass coordinatewise, and it guarantees that each realized packing
omits at least one target.  The latter is precisely the hypothesis needed by
Lemma~\ref{lem:sparse-hull} to return to the original interval budget.

\begin{theorem}[Sparse domination]\label{thm:sparse-domination}
Fix $m\ge3$ and ordered targets
\[
    0<x_1<\cdots<x_m<1.
\]
For every outer feasible vector $p\in\R_+^m$, there is a finitely supported random
centered packing $\Lambda$ on these targets such that
\[
    \E[\Lambda_i]\ge\frac23p_i
    \qquad(i=1,\ldots,m),
\]
and
\[
    |\supp(\Lambda)|\le m-1
\]
almost surely.
\end{theorem}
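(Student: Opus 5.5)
The plan is strong induction on $m\ge3$. Both required properties are preserved under finite mixtures: the coordinatewise bound $\E[\Lambda_i]\ge\frac23p_i$ is linear in the law of $\Lambda$, and the almost-sure support bound $|\supp(\Lambda)|\le m-1$ holds for a mixture if it holds for each component. So it suffices to treat extreme points of $F_m(S)$.

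\textbf{Convexity and extreme points.} Each constraint $g_j=M_j-A_j^2/2\ge0$ is a superlevel set of a concave function of $p$, hence convex. Together with the two linear equalities and nonnegativity, this makes $F_m(S)$ compact and convex in $\R^m$. By Minkowski's theorem (Carath\'eodory), every $p\in F_m(S)$ is a finite convex combination of extreme points. Mixing the corresponding finitely supported packings gives a finitely supported packing with the required properties.

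Now let $p$ be an extreme point with all $p_i>0$ and $g_j>0$ for every $1\le j<m$. Then every inequality constraint is strict at $p$, and by continuity it stays strict in a neighbourhood. The affine slice cut out by the two equalities has dimension $m-2\ge1$, so $p$ is the midpoint of a small segment inside $F_m(S)$. This contradicts extremality. Hence an extreme point has either a zero coordinate or a tight proper prefix $g_j=0$ with $1\le j<m$.

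\textbf{Case 1: a zero coordinate.} Delete all zero coordinates, leaving $r\le m-1$ positive targets. The surviving vector is still outer feasible: the prefix sums over the positive targets are a subset of the original $A_j,M_j$, so the same slacks remain nonnegative. Assign length $0$ to the deleted targets. Every packing on the $r$ survivors has support at most $r\le m-1$, so only the expectation bound matters, and we produce it as follows.
\begin{itemize}
\item If $r=1$, outer feasibility forces the target to be $1/2$, and the whole interval $[0,1]$ gives length $1=p_i$.
\item If $r=2$, use Lemma~\ref{lem:two-point}.
\item If $r\ge3$, use the induction hypothesis.
\end{itemize}

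\textbf{Case 2: all $p_i>0$ and some $g_j=0$ with $1\le j<m$.} By Lemma~\ref{lem:prefix-split}, the first $j$ targets form an outer feasible block on $I_L=[0,A_j]$ and the rest form one on $I_R=[A_j,1]$. For each block of size $r_b$ we build a certificate, after affine rescaling via $T_{I_b}$, together with an exceptional event $E_b$ off which the support is at most $r_b-1$:
\begin{itemize}
\item if $r_b=1$, use Lemma~\ref{lem:one-point}, with $\Pr(E_b)=2/3$;
\item if $r_b=2$, use Lemma~\ref{lem:two-point}, with $\Pr(E_b)=1/3$;
\item if $r_b\ge3$, use the induction hypothesis on the rescaled block, with $E_b=\emptyset$.
\end{itemize}
Scaling multiplies lengths by the block length $h$ and masses by $h$, so each block certificate still has expected lengths at least $\frac23p_i$.

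The exceptional probabilities sum to more than $1$ only when both blocks are singletons, which requires $m=2$ and is excluded. Lemma~\ref{lem:coupling} then couples the two certificates while keeping their marginals. The concatenation is a centered packing in $[0,1]$ with support at most $j+(m-j)-1=m-1$ almost surely, and the expectation bounds hold coordinatewise.

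\textbf{Main obstacle.} The delicate step is the extreme-point dichotomy, because the prefix constraints are nonlinear and one cannot simply count active linear constraints as in a polytope. The local-perturbation argument above handles this, since strict inequalities persist near $p$. The rest is bookkeeping: checking that deleting zeros and splitting at tight prefixes both preserve outer feasibility, which is exactly what Lemma~\ref{lem:prefix-split} and the prefix-sum observation in Case 1 provide.
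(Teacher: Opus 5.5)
Your proposal is correct and follows the paper's proof essentially step for step: strong induction with reduction to extreme points of the compact convex outer-feasible set, the perturbation along a nonzero null direction (available since $m\ge3$) to force a zero mass or a tight proper prefix, deletion of zero coordinates, and splitting at a tight prefix with exceptional probabilities $2/3$, $1/3$, $0$ glued via Lemma~\ref{lem:coupling}. The only deviation is cosmetic: when a single target survives the deletion step you use the full interval $[0,1]$ deterministically instead of Lemma~\ref{lem:one-point}, which is equally valid because the support bound $1\le m-1$ is automatic there.
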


\begin{proof}
The proof combines convex geometry with the tight-prefix recursion.  For a
fixed target support, we first reduce to extreme points of the outer-feasible
mass set.  An extreme point must either have a zero mass, which lowers the
support size, or a tight proper-prefix inequality, which invokes
Lemma~\ref{lem:prefix-split}.  In the latter case we certify the two blocks
recursively and use Lemma~\ref{lem:coupling} to glue them while omitting at
least one target.

We argue by strong induction on $m\ge3$.  Whenever the induction hypothesis is
used below, it is applied to the same statement on $r$ ordered interior
targets with $3\le r<m$; sizes one and two are handled by
Lemmas~\ref{lem:one-point} and~\ref{lem:two-point}.  In particular, the argument
also supplies the base case $m=3$.

\medskip\noindent\textbf{Reduction to extreme points.}
Fix the target points and let $\mathcal F_m$ be the set of outer feasible mass
vectors on this support.  If $\mathcal F_m$ is empty there is nothing to prove.
Otherwise it is compact and convex.  Indeed, its nonnegativity and total-mass
constraints place it in the compact probability simplex; the total-mass and
mean equalities are closed affine constraints; and each prefix constraint is a
closed superlevel set of the continuous concave function
\[
    p\longmapsto M_j(p)-\frac{A_j(p)^2}{2}.
\]
The set of mass vectors admitting the asserted random packing is also convex,
because mixing two finitely supported certificate laws preserves finite
support, the per-outcome support bound, and the coordinatewise expectation
inequalities.  Since $\mathcal F_m$ is compact and convex, Minkowski's theorem gives
\[
    \mathcal F_m
    =
    \conv\bigl(\operatorname{ext}(\mathcal F_m)\bigr),
\]
and Carath\'eodory's theorem implies that every point of $\mathcal F_m$ is a
convex combination of at most
$\dim(\operatorname{aff}\mathcal F_m)+1$ extreme points
\citep[Theorems~17.1 and~18.5]{Rockafellar1970}.
It is therefore enough to treat an extreme point $p\in\mathcal F_m$.

\medskip\noindent\textbf{Structure of an extreme point.}
At such an extreme point, either some mass is zero or some proper prefix
inequality is tight, with $1\le j<m$.  Indeed, suppose that every mass and
every proper prefix slack $g_j(p)$ were strictly positive.  Because $m\ge3$,
the common nullspace of the two linear functionals below contains a nonzero
vector $v$:
\[
    \sum_i v_i=0,
    \qquad
    \sum_i x_iv_i=0.
\]
The two equalities defining total mass and mean are therefore preserved along
$p+tv$.  Since there are only finitely many coordinates and proper prefixes,
coordinate positivity and continuity of every $g_j$ give an
$\varepsilon>0$ for which $p+tv$ has nonnegative coordinates and all proper
prefix slacks nonnegative whenever $|t|\le\varepsilon$.  (The final slack is
always zero under the preserved total-mass and mean equalities.)  Thus the two
distinct vectors $p+\varepsilon v$ and $p-\varepsilon v$ lie in
$\mathcal F_m$ and have midpoint $p$, contradicting extremality.

\medskip\noindent\textbf{Zero-mass case.}
Suppose first that some $p_i=0$.  Delete all zero coordinates, and let
$1\le r<m$ be the remaining support size.  Total mass and mean are unchanged.
Moreover, every prefix of the reduced ordered support has the same mass and
first moment as the original prefix ending at its last retained index, since
all skipped terms have zero mass.  Hence the reduced vector is outer feasible.
If $r=1$, use Lemma~\ref{lem:one-point}; if $r=2$, use
Lemma~\ref{lem:two-point}; and if $r\ge3$, use the induction hypothesis.
After embedding the resulting packing back into the original coordinate space
by assigning zero lengths to the deleted coordinates, its support size is at
most $r\le m-1$, and its expected lengths dominate $(2/3)p$.

\medskip\noindent\textbf{Tight-prefix case.}
Assume now that every mass is positive and that
\[
    M_j=\frac{A_j^2}{2}
\]
for some $j<m$.  By Lemma~\ref{lem:prefix-split}, the distribution splits
into adjacent outer feasible blocks of sizes
\[
    r_L:=j,
    \qquad
    r_R:=m-j.
\]
For a block of size one, use Lemma~\ref{lem:one-point} and declare its nonempty
outcome exceptional; its exceptional probability is $2/3$, and outside that
event it uses zero targets.  For a block of size two, use
Lemma~\ref{lem:two-point} and declare the two-target outcomes exceptional; their
total probability is exactly $1/3$, and outside that event the packing uses at
most one target.  For a block of size at least three, first normalize it to
$[0,1]$, use the induction hypothesis, undo the affine normalization, and take
the exceptional event to be empty; every outcome then uses at most one fewer
target than that block contains.  In every case the certificate is finitely
supported and has coordinate expectations at least two thirds of the block
masses.

Define
\[
    \eta_1:=\frac23,
    \qquad
    \eta_2:=\frac13,
    \qquad
    \eta_r:=0\quad(r\ge3).
\]
The exceptional probability of a block of size $r$ is at most $\eta_r$.
Since $r_L,r_R\ge1$ and $r_L+r_R=m\ge3$, the only potentially largest pair
is $\{r_L,r_R\}=\{1,2\}$, and hence
\[
    \eta_{r_L}+\eta_{r_R}\le1.
\]
Lemma~\ref{lem:coupling} therefore couples the two block certificates without
changing their marginals and makes their exceptional events disjoint.  After
undoing the affine normalizations and concatenating the two adjacent packings,
we obtain a centered packing on the original targets.  Its expected length in
every coordinate is at least two thirds of the corresponding mass because the
coupling preserves both marginal laws and affine rescaling multiplies masses and
lengths by the same block length.  In every coupled outcome at least one block
is nonexceptional, so the two blocks together use at most
\[
    r_L+r_R-1=m-1
\]
targets.

\medskip\noindent\textbf{Convexification.}
Thus every extreme point of $\mathcal F_m$ has the required certificate.  For
an arbitrary $p\in\mathcal F_m$, write $p$ as a finite convex combination of
extreme points and first sample the extreme point according to those weights,
then sample its certificate.  This produces a finitely supported law, preserves
the support bound in every outcome, and by linearity gives expected lengths at
least $(2/3)p$.  This completes the induction.
\end{proof}

\subsection{Completion of the proof}

The geometric work is now complete.  Starting from an unrestricted scheme,
Lemma~\ref{lem:prefix} places its posterior-mean mass vector in the
outer-feasible set.  Theorem~\ref{thm:sparse-domination} produces a sparse
packing with the required coordinatewise expectations,
Lemma~\ref{lem:sparse-hull} converts that packing into the original interval
budget, and Lemma~\ref{lem:payoff} returns to the designer's objective.  We now
carry out this chain and then recall the example giving tightness.

\begin{proof}[Proof of Theorem~\ref{thm:kge3}]
\medskip\noindent\textbf{Lower bound.}
Consider an arbitrary unrestricted scheme with at most $K$ signals, and
combine signals having the same posterior mean.  Let
\[
    Q=\sum_{i=1}^m p_i\delta_{x_i},
    \qquad
    m\le K,
\]
be the resulting posterior-mean distribution.  By Lemma~\ref{lem:prefix}, its
mass vector is outer feasible.

If $m=1$, take the full interval $[0,1]$ as a deterministic centered packing; its midpoint is $1/2$, and it matches the scheme.  If $m=2$, Lemma~\ref{lem:two-point} gives a random
centered packing that dominates $(2/3)p$ and is supported on at most two
targets.  If $m\ge3$, Theorem~\ref{thm:sparse-domination} gives such a random
packing supported on at most $m-1$ targets.  In every case there is a random
packing $\Lambda$ satisfying
\[
    \E[\Lambda_i]\ge\frac23p_i
\]
and
\[
    |\supp(\Lambda)|\le K-1
\]
almost surely: the three support bounds are respectively
$1\le K-1$, $2\le K-1$, and $m-1\le K-1$.  All three laws are finitely
supported.  Lemma~\ref{lem:sparse-hull} first gives
$\E[\Lambda]\in\downconv W_K(\{x_1,\ldots,x_m\})$.  Since this set is
coordinatewise downward closed and $\E[\Lambda]\ge(2/3)p$, it follows that
\[
    \frac23p
    \in
    \downconv W_K(\{x_1,\ldots,x_m\}).
\]
By Lemma~\ref{lem:payoff}, some deterministic $K$-interval partition has
value at least two thirds of the value of the unrestricted scheme.  Thus, for
the value $V$ of every individual admissible scheme,
$\OPTpart^u(K)\ge(2/3)V$.  Taking the supremum over schemes on the right---with
no assumption that the supremum is attained---proves the lower bound.

\medskip\noindent\textbf{Tightness.}
For tightness, let
\[
    u(t)=\mathbf 1\{t=1/3\}+\mathbf 1\{t=2/3\}.
\]
The unrestricted scheme that signals $[1/12,7/12]$ versus its complement
has first-signal mass $1/2$ and mean
$(1/12+7/12)/2=1/3$.  Its complement has first moment
$1/2-(1/2)(1/3)=1/3$, so its conditional mean is $2/3$.  Hence the two
signals each have probability $1/2$ and the scheme has value one.  Since
$0\le u\le1$, this also proves $\OPT^u(K)=1$ for every $K\ge3$.

In an interval partition, only intervals centered at $1/3$ or $2/3$
contribute.  Two distinct positive partition intervals cannot have the same
midpoint, so there is at most one useful interval at each target.  Let their
lengths be $\ell$ and $r$, with a missing interval interpreted as length
zero.  If both are present, disjoint interiors give
\[
    \frac13+\frac\ell2
    \le
    \frac23-\frac r2,
\]
while if either is missing the same inequality follows from containment of the
remaining centered interval in $[0,1]$.  Thus $\ell+r\le2/3$, and every
interval partition has value at most $2/3$.

The two-interval partition $[0,2/3]$, $[2/3,1]$ is admissible under the
``at most $K$'' convention for every $K\ge3$.  Its first interval has
midpoint $1/3$ and contributes $2/3$, while the second contributes zero.
(Under the displayed $K$-slot convention, repeat endpoints to add
zero-length intervals.)  Hence $\OPTpart^u(K)=2/3$ and the factor is tight for
every $K\ge3$.
\end{proof}

\section{Putting it all together}\label{sec:main-result}
\begin{theorem}\label{thm:main}
For one-dimensional linear information design under
$\theta\sim\Unif[0,1]$, when the utility ranges over all bounded functions
$u:[0,1]\to[0,1]$, the uniform price of explainability satisfies
\[
    c_K^{\rm unif}
    =
    \begin{cases}
        \dfrac12, & K=2,\\[2mm]
        \dfrac23, & K\ge3.
    \end{cases}
\]
\end{theorem}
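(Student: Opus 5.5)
Theorem~\ref{thm:main} is a direct assembly of the two regime-specific results already established, so the plan is simply to combine them with the definition of $c_K^{\rm unif}$ as an infimum of ratios over utilities with $\OPT^u(K)>0$.

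\textbf{Case $K=2$.} Theorem~\ref{thm:k2} gives $\OPTpart^u(2)\ge\frac12\OPT^u(2)$ for every $u:[0,1]\to[0,1]$. Dividing by $\OPT^u(2)$ whenever it is positive yields $c_2^{\rm unif}\ge 1/2$. For the reverse inequality, the second part of Theorem~\ref{thm:k2} provides, for each $\varepsilon\in(0,1/12)$, a binary utility $u_\varepsilon$ with $\OPT^{u_\varepsilon}(2)=1>0$ and ratio $\frac12+2\varepsilon$. Hence $c_2^{\rm unif}\le\inf_\varepsilon(\frac12+2\varepsilon)=\frac12$.

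\textbf{Case $K\ge3$.} Theorem~\ref{thm:kge3} gives $\OPTpart^u(K)\ge\frac23\OPT^u(K)$ for all $u$, so $c_K^{\rm unif}\ge 2/3$. For the upper bound I will use the tightness example in its proof: $u=\mathbf 1\{t=1/3\}+\mathbf 1\{t=2/3\}$ has $\OPT^u(K)=1$ and $\OPTpart^u(K)=2/3$ for every $K\ge3$. This exhibits an admissible utility with ratio exactly $2/3$, so $c_K^{\rm unif}\le 2/3$.

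There is essentially no obstacle here. The only points to watch are bookkeeping. First, the witnessing utilities must have strictly positive $\OPT$, so that they are not excluded from the infimum; both examples have $\OPT=1$. Second, the ``at most $K$ intervals'' convention must be consistent with the examples. The tightness partition uses two intervals, padded with empty intervals where needed, so it is admissible for every $K\ge3$.
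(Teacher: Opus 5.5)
Your proposal is correct and follows the paper's proof, which simply invokes Theorem~\ref{thm:k2} for $K=2$ and Theorem~\ref{thm:kge3} for $K\ge3$. Your explicit handling of the infimum is the bookkeeping those theorems' ``consequently'' clauses leave implicit: lower bounds come from dividing by a positive $\OPT^u(K)$, and upper bounds from the witness utilities with $\OPT=1$.
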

\begin{proof}
Theorem~\ref{thm:k2} gives $c_2^{\rm unif}=1/2$, and
Theorem~\ref{thm:kge3} gives $c_K^{\rm unif}=2/3$ for every $K\ge3$.
\end{proof}

Thus, the $2/3$ guarantee requires no increase in the signal budget for
any $K\ge3$, while $K=2$ is the unique exception.

\section{Conclusions and future work}\label{sec:main-conclusion}

We have established the exact price of explainability under the uniform prior, showing that no additional signals are needed to attain the tight $2/3$ guarantee for every $K \ge 3$, while $K=2$ remains the unique exception. A natural direction for future work is to determine how the exact price of explainability depends on the prior. In particular, it would be interesting to identify structural properties of the prior that lead to guarantees strictly larger than the general $1/2$ bound. Another direction is to extend the analysis to higher-dimensional state spaces, where the appropriate notion of an explainable partition and the geometry of feasible posterior means are substantially more complex.

\clearpage
\bibliographystyle{plainnat}
\bibliography{references}
\end{document}